\documentclass[aps,pra,superscriptaddress,showpacs]{revtex4}

\usepackage{amsfonts,amssymb,amsmath}
\usepackage{amsthm}
\usepackage{graphics,graphicx,epsfig}
\usepackage[dvipsnames]{xcolor}
\usepackage[normalem]{ulem}
\usepackage{comment}
\usepackage{epstopdf}
\usepackage{subfigure}
\usepackage{float}
\usepackage{subfigure}
\usepackage[colorlinks=true, pdfborder=001, linkcolor=blue, anchorcolor=blue, citecolor=blue, urlcolor=blue]{hyperref}

\newtheorem{theorem}{Theorem}

\newtheorem{remark}{Remark}
\newtheorem{example}{Example}
\newtheorem{lemma}{Lemma}
\newcommand{\ket}[1]{|#1\rangle}
\newcommand{\bra}[1]{\langle#1|}

\def\be{\begin{eqnarray}}
\def\ee{\end{eqnarray}}
\def\Tr{\mathrm{Tr}}

\begin{document}
\title{Tighter monogamy and polygamy relations for a superposition of the generalized $W$-class state and vacuum}

\author{Le-Min Lai}
\affiliation{School of Mathematical Sciences,  Capital Normal University,  Beijing 100048,  China}
\author{Shao-Ming Fei}
\email{feishm@cnu.edu.cn}
\affiliation{School of Mathematical Sciences,  Capital Normal University, Beijing 100048,  China}
\affiliation{Max Planck Institute for Mathematics in the Sciences, 04103 Leipzig, Germany}
\author{Zhi-Xi Wang}
\email{wangzhx@cnu.edu.cn}
\affiliation{School of Mathematical Sciences, Capital Normal University, Beijing 100048,   China}


\begin{abstract}
Monogamy and polygamy relations characterize the distributions of entanglement in multipartite systems. We investigate the monogamy and polygamy relations with respect to any partitions for a superposition of the generalized $W$-class state and vacuum in terms of the Tsallis-$q$ entanglement and the R\'enyi-$\alpha$ entanglement. By using the Hamming weight of the binary vectors related to the partitions of the subsystems, new classes of monogamy and polygamy inequalities are derived, which are shown to be tighter than the existing ones. Detailed examples are presented to illustrate the finer characterization of entanglement distributions.

\vspace{0.5cm}
\noindent{ \bf Keywords}: monogamy  relation, polygamy relation,  Tsallis-$q$ entanglement, R\'enyi-$\alpha$ entanglement,  generalized $W$-class state and vacuum
\end{abstract}

\maketitle

\section{Introduction}\label{sec1}

Quantum entanglement \cite{Ent1,Ent2,Ent3} is a quintessential feature of quantum mechanics which distinguishes the quantum from the classical world and plays an important role in quantum information processing. One distinguished property of quantum entanglement without any classical
counterpart is its limited shareability in multipartite quantum systems, known as the monogamy of entanglement (MoE) \cite{BMT,JSK}.
MoE is the fundamental ingredient in many quantum information processing tasks such as the security proof in quantum cryptographic scheme \cite{BCH}
and the security analysis of quantum key distribution \cite{MP}.

For a tripartite quantum state $\rho_{ABC}$ with its reduced
density matrices $\rho_{AB}=\Tr_C {\rho_{ABC}}$ and $\rho_{AC}=\Tr_B \rho_{ABC}$, mathematically MoE can be characterized in terms of some bipartite entanglement measure $\varepsilon$ as
$\varepsilon(\rho_{A|BC})\geqslant\varepsilon(\rho_{AB})+\varepsilon(\rho_{AC})$, where $\varepsilon(\rho_{A|BC})$ denotes the shared entanglement between subsystems $A$ and $BC$, which measures the degree of entanglement between $A$ and $BC$,  and $\varepsilon(\rho_{AB})$ ($\varepsilon(\rho_{AC})$) is the bipartite
entanglement between A and B (A and C). This inequality  conveys the MoE principle that the amount of
entanglement shared between $A $ and $B$ restricts the
possible amount of entanglement between $A$ and $C$ so
that their sum does not exceed the total bipartite
entanglement between $A$ and the composite $BC$
system. Note that the monogamy inequalities provide an upper bound for bipartite sharability of entanglement in a multipartite system. It is also known that the
assisted entanglement $\varepsilon^a$ \cite{GG12007,GG2007}, which is a dual amount to bipartite
entanglement measures, has a dually monogamous property
in multipartite systems. This dually monogamous
property of entanglement is also characterized as a polygamy inequality, which is quantitatively displayed as $\varepsilon^a(\rho_{A|BC})\leqslant\varepsilon^a(\rho_{AB})+\varepsilon^a(\rho_{AC})$ for a tripartite system, where $\varepsilon^a(\cdot)$ is the corresponding entanglement measure of assistance associated to $\varepsilon$. Similarly, a polygamy inequality sets a lower bound for the distribution  of bipartite entanglement in multipartite systems.

The first monogamy relation was proven by Coffman {\it et al.} \cite{CV}  based on the squared concurrence for arbitrary three-qubit states, known as the CKW inequality. Later, it was generalized to multipartite systems \cite{OTJ,BYK}. Besides concurrence, the monogamy  relations are also given by various entanglement measures for multipartite systems \cite{HT,AG,Kim2009,BYK2,ZXN,JZX,Kim2016T1,JZX1,LY,JZX2,JZX5,Kim2018}. The polygamy relation was first obtained in terms of the tangle of assistance for three-qubit systems \cite{GG12007}, and then generalized to multiqubit systems and arbitrary dimensional multipartite systems \cite{Kim2018,Kim2010,BF2009,Kim2012,JZX3,KJS20181,KJS20182,Shi1}.

However, it was found that the CKW inequality is invalid for higher-dimensional systems \cite{OYC}. In \cite{LC} the authors discovered that in some higher-dimensional systems there is no nontrivial monogamy relation satisfied by any additive entanglement measures. It seems that only the squashed entanglement satisfies the monogamy relation for arbitrary dimensional systems \cite{CM}. Therefore, the MoE for high dimensional systems has attracted much attention.

In \cite{Kim2008} Kim {\it et al.} proved that the $n$-qudit generalized $W$-class (GW) states satisfy the monogamy inequality in terms of the squared concurrence. In \cite{Choi2015} Choi and Kim showed that the superposition of the generalized $W$-class states and vacuum (GWV) states satisfy the strong monogamy inequality based on the squared convex roof extended negativity. In \cite{Kim2016} Kim focused on a large class of mixed states that are in a partially coherent superposition of a generalized $W$-class state and the vacuum, and showed that those states obey the strong monogamy inequality by using the squared convex roof extended negativity. Very recently,  Shi {\it et al.} presented in \cite{Shi2020} new monogamy and polygamy relations with respect to any partition for $n$-qudit GWV states by using the analytical formula of the Tsallis-$q$ entanglement (T$q$E) \cite{Kim2010}. Moreover,  Liang {\it et al.} presented in \cite{Liang2020} the monogamy and polygamy relations  for GWV states in terms of the R\'enyi-$\alpha$ entanglement (R$\alpha$E) \cite{Renyil,KimR2010}. Inspired by these developments, we  investigate further the monogamy and polygamy relations for the GWV states in high dimensional quantum systems.

In this paper, by using the Hamming weight of the binary vector associated with the distribution of subsystems, we establish a class of  monogamy  and polygamy relations for the GWV states based on  T$q$E and R$\alpha$E. We derive monogamy and polygamy inequalities which are tighter than  those given in \cite{Shi2020,Liang2020}, thus giving rise to finer characterizations of the entanglement distributions among the high dimensional quantum subsystems for the GWV states.

\section{Tighter  monogamy and polygamy relations  based on T$q$E and T$q$EoA for GWV states}\label{sec3}

A class of $n$-qubit $W$-class states and $n$-qudit generalized $W$-class states are, respectively, defined by
\begin{eqnarray}
	\ket{\psi}_{A_1 A_2 ... A_n}=&
	a_1 \ket{10\cdots0}+a_2 \ket{01\cdots0}+...+a_n \ket{00\cdots1}
	\label{qubitWstate}
\end{eqnarray}
and
\begin{eqnarray}
	\left|W_n^d \right\rangle_{A_1\cdots A_n}=\sum_{i=1}^{d-1}(&a_{1i}{\ket {i0\cdots 0}} +a_{2i}{\ket {0i\cdots 0}}+\cdots +a_{ni}{\ket {00\cdots 0i}}),
	\label{quditWstate}
\end{eqnarray}
where  $\sum_{i=1}^{n}|a_i|^2 =1$ and  $\sum_{s=1}^{n}\sum_{i=1}^{d-1}|a_{si}|^2=1$. When $d=2$, (\ref{quditWstate}) reduces to the $n$-qubit $W$-class states.

Choi and Kim introduced in \cite{Choi2015} the GWV state $\ket{\psi}_{A_1\cdots A_n}$,
\begin{equation}
	\ket{\psi}_{A_1A_2\cdots A_n}=\sqrt{p}\left|W_n^d \right\rangle_{A_1\cdots A_n}+\sqrt{1-p}\ket{0\cdots 0}_{A_1\cdots A_n}
	\label{GWV}
\end{equation}
for $0\leqslant p \leqslant 1$.
Let $\rho_{A_{i_1}\cdots A_{i_{m}}}$ denotes the reduced density matrix with respect to $\ket{\psi}_{A_1\cdots A_n}$ in $m$-qudit subsystems $A_{i_1}\cdots A_{i_{m}}$ with $2 \leqslant m \leqslant  n-1$. It has been shown that for any pure state decomposition of $\rho_{A_{i_1}\cdots A_{i_{m}}}$,
\begin{equation}
	\rho_{A_{i_1}\cdots A_{i_{m}}}=\sum_{k}q_k\ket{\phi_k}_{A_{i_1}\cdots A_{i_{m}}}\bra{\phi_k},
	\label{rhoa1aj1ajm-1}
\end{equation}
$\ket{\phi_k}_{A_{i_1}\cdots A_{i_{m}}}$ is a superposition of an $m$-qudit generalized $W$-class state and vacuum \cite{Choi2015}. Moreover,
for an $n$-qudit GWV state $\ket{\psi}_{A_1A_2\cdots A_n}$ and an
arbitrary partition $P=\{P_1,\cdots,P_r\}$ of the set $S=\{A_1,\cdots,A_n\}$, $r\leqslant n$,
$P_i\cup P_j=\emptyset$ $(i\neq j)$ and $\bigcup_iP_i=S$,
the state $\ket{\psi}_{P_1P_2\cdots P_r}$ is also a GWV state \cite{Kim2016}.

The T$q$E of a bipartite pure state $|\psi\rangle_{AB}$ is defined as \cite{Kim2010}
\begin{equation}
	T_q(|\psi\rangle_{AB})=S_q(\rho_A)=\frac{1}{q-1}(1-{\rm tr}\rho_A^q),
\end{equation}
where $q>0$ and $q\neq1$.
When $q$ tends to 1, $T_q(\rho)$ converges to the von Neumann entropy, i.e., $\lim\limits_{q\rightarrow 1}T_q(\rho)=-{\rm tr}\rho\log_2\rho=S(\rho)$.
The Tsallis-$q$ entanglement of a bipartite mixed state $\rho_{AB}$ is given by
\begin{equation}
	T_q(\rho_{AB})=\min\limits_{\{p_i,|\psi_i\rangle\}}\sum\limits_{i}p_iT_q(|\psi_i\rangle)
\end{equation}
with the minimum taken over all possible pure state decompositions of $\rho_{AB}$. As a dual concept of T$q$E, its Tsallis-$q$ entanglement of assistance (T$q$EoA) is defined as
\begin{equation}
	T_q^a(\rho_{AB})=\max\limits_{\{p_i,|\psi_i\rangle\}}\sum\limits_{i}p_iT_q(|\psi_i\rangle)
\end{equation}
with the maximum taken over all possible pure state decompositions of $\rho_{AB}$.
There is an analytic relationship
between the Tsallis-$q$ entanglement and concurrence \cite{Concurrence1,Concurrence2} for $q\in[\frac{5-\sqrt{13}}{2}, \frac{5+\sqrt{13}}{2}]$ \cite{Yuan2016},
\begin{equation}\label{T1}
	T_q(|\psi\rangle_{AB})=g_q(\mathcal{C}^2(|\psi\rangle_{AB})),
\end{equation}
where
\begin{equation}\label{T2}
	g_q(x)=\frac{1}{q-1}\Big[1-\Big(\frac{1+\sqrt{1-x}}{2}\Big)^q-\Big(\frac{1-\sqrt{1-x}}{2}\Big)^q\Big].
\end{equation}
It has also been shown that $T_q(|\psi\rangle)=g_q(\mathcal{C}^2(|\psi\rangle))$ for any  $2\otimes m~(m\geqslant2)$ pure state $|\psi\rangle$,
and $T_q(\rho)=g_q(\mathcal{C}^2(\rho))$ for 2-qubit mixed state $\rho$ \cite{Kim2010}.
Therefore,  (\ref{T1}) holds for any $q$ such that $g_q(x)$ in (\ref{T2}) is monotonically increasing and convex.

Let $\rho_{A_{i_1}A_{i_2}\cdots A_{i_m}}$ be the reduced density matrix of $\ket{\psi}_{A_1\cdots A_n}$ in (\ref{GWV}). Denote by $\{P_1,\cdots,P_r\}$  a partition of the set $\{A_{i_1},A_{i_2},\cdots,A_{i_m}\}$, $r\leqslant m\leqslant n$. In \cite{Shi2020} Shi {\it et al.} proved that
\begin{eqnarray}\label{T8}
	T_q(\rho_{A_{i_1}|A_{i_2}\cdots A_{i_m}})=g_q(C^2(\rho_{A_{i_1}|A_{i_2}\cdots A_{i_m}}))
\end{eqnarray}
for $q\in[\frac{5-\sqrt{13}}{2},\frac{5+\sqrt{13}}{2}]$, and
\begin{eqnarray}\label{T7}
	T_q^a(\rho_{A_{i_1}|A_{i_2}\cdots A_{i_m}})=T_q(\rho_{A_{i_1}|A_{i_2}\cdots A_{i_m}})=g_q(C^2(\rho_{A_{i_1}|A_{i_2}\cdots A_{i_m}}))
\end{eqnarray}
for $q\in[\frac{5-\sqrt{13}}{2}, 2]\cup[3, \frac{5+\sqrt{13}}{2}]$.

Furthermore,  for GWV states the authors in \cite{Shi2020} established the monogamy relation
\begin{eqnarray}\label{T3}
	T_q^\mu(\rho_{P_1|P_2\cdots P_r})\geqslant \sum_{j=2}^{r} T_q^\mu(\rho_{P_1P_j})
\end{eqnarray}
for $q\in[\frac{5-\sqrt{13}}{2},\frac{5+\sqrt{13}}{2}]$ and $\mu\in[2,\infty)$, and the general monogamy relation,
\begin{eqnarray}\label{T4}
	T_q^\gamma(\rho_{P_1|P_2\cdots P_r})&\geqslant& \sum_{j=2}^{t}(2^{\frac{\gamma}{\mu}}-1)^{j-2}T_q^\gamma(\rho_{P_1P_j})+(2^{\frac{\gamma}{\mu}}-1)^t\sum_{j=t+1}^{r-1}T_q^\gamma(\rho_{P_1P_j})\nonumber\\
	&&\ \ +(2^{\frac{\gamma}{\mu}}-1)^{t-1}T_q^\gamma(\rho_{P_1P_r}),
\end{eqnarray}
conditioned that $T_q(\rho_{P_1P_i})\leqslant T_q(\rho_{P_1|P_{i+1}\cdots P_r}) $ for $i=2,3,\cdots, t$, and $T_q(\rho_{P_1P_j})\geqslant T_q(\rho_{P_1|P_{j+1}\cdots P_r})$ for $j=t+1,\cdots, r-1$ with $\gamma\in[0,\mu]$ and $\mu\in[2,\infty)$.

For $q\in[\frac{5-\sqrt{13}}{2}, 2]\cup[3, \frac{5+\sqrt{13}}{2}]$, the following polygamy relation based on the T$q$EoA has been obtained \cite{Shi2020},
\begin{eqnarray}\label{T5}
	(T_q^a(\rho_{P_1|P_2\cdots P_k}))^\mu\leqslant \sum_{j=2}^{r} (T_q^a(\rho_{P_1P_j}))^\mu
\end{eqnarray}
with $\mu\in(0,1]$.

Next, we provide a class of monogamy and  polygamy  inequalities which are tighter than  inequalities  (\ref{T3}), (\ref{T4}) and (\ref{T5}), respectively, by using the following lemma \cite{Yang2019}.

\begin{lemma}\label{lem1}
	For any real numbers $x, k$ and $t$, we have
	
	$\mathrm{(a)}$ $(1+x)^t\geqslant 1+\frac{(1+k)^t -1}{k^t} x^t$ for $0\leqslant x \leqslant k\leqslant1$, $t\geqslant1$;
	
	$\mathrm{(b)}$ $(1+x)^t\geqslant 1+\frac{(1+k)^t -1}{k^t} x^t$ for $x\geqslant k\geqslant 1$,  $0\leqslant t\leqslant1$;
	
	$\mathrm{(c)}$ $(1+x)^t \leqslant 1+\frac{(1+k)^t -1}{k^t}x^t$ for $0\leqslant x \leqslant k\leqslant1$, $0\leqslant t\leqslant1$.
	
\end{lemma}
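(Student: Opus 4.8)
The plan is to observe that all three inequalities, after dividing by $x^t$, reduce to a single monotonicity statement about the auxiliary function
\begin{equation*}
f(x)=\frac{(1+x)^t-1}{x^t},\qquad x>0 .
\end{equation*}
Since the common coefficient satisfies $\frac{(1+k)^t-1}{k^t}=f(k)$, the claim $(1+x)^t\geqslant 1+\frac{(1+k)^t-1}{k^t}x^t$ is, for $x>0$, equivalent to $f(x)\geqslant f(k)$, and the reversed inequality in (c) is equivalent to $f(x)\leqslant f(k)$. The endpoint $x=0$ is trivial, because both sides of every inequality collapse to $1$, so equality holds there; hence it suffices to control the sign of $f(x)-f(k)$ for $x>0$.

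First I would simplify $f$ by the substitution $u=1/x$, which turns it into
\begin{equation*}
f(x)=g(u):=(1+u)^t-u^t,\qquad u=\tfrac{1}{x}>0 .
\end{equation*}
Differentiating gives $g'(u)=t\big[(1+u)^{t-1}-u^{t-1}\big]$, whose sign is governed entirely by the monotonicity of the power function $s\mapsto s^{t-1}$. When $t\geqslant1$ the exponent $t-1\geqslant0$ makes $s\mapsto s^{t-1}$ nondecreasing, so $(1+u)^{t-1}\geqslant u^{t-1}$ and $g'(u)\geqslant0$; when $0\leqslant t\leqslant1$ the exponent $t-1\leqslant0$ makes it nonincreasing, so $g'(u)\leqslant0$. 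Because $u=1/x$ is a decreasing function of $x$, the chain rule reverses the sign, giving that $f$ is \emph{nonincreasing} in $x$ for $t\geqslant1$ and \emph{nondecreasing} in $x$ for $0\leqslant t\leqslant1$.

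With this monotonicity in hand, each part follows by comparing $x$ with $k$. For (a), $t\geqslant1$ makes $f$ nonincreasing, so $0\leqslant x\leqslant k$ yields $f(x)\geqslant f(k)$, which is exactly the asserted inequality. For (c), $0\leqslant t\leqslant1$ makes $f$ nondecreasing, so $0\leqslant x\leqslant k$ yields $f(x)\leqslant f(k)$, giving the reversed inequality. For (b), again $0\leqslant t\leqslant1$ so $f$ is nondecreasing, but now $x\geqslant k$, whence $f(x)\geqslant f(k)$. The degenerate cases $t=1$ and $t=0$, where $g$ is constant and the two sides coincide, are absorbed into the same computation.

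I do not expect a serious obstacle: the whole argument collapses to the elementary monotonicity of a power function once the reduction to $f(x)$ versus $f(k)$ and the substitution $u=1/x$ are made. The only points that demand care are the separate treatment of $x=0$ in (a) and (c), where the division by $x^t$ is invalid but equality holds trivially, and a clean justification that $s\mapsto s^{t-1}$ has the stated monotonicity for real, possibly noninteger, exponents. I would also remark that the constraints $k\leqslant1$ in (a), (c) and $k\geqslant1$ in (b) are not actually needed for the monotonicity step; they merely delimit the regime in which the lemma will later be applied to ratios of entanglement measures in the subsequent monogamy and polygamy theorems.
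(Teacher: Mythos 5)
Your proof is correct. Note that the paper itself does not prove this lemma at all—it is imported verbatim from Ref.~\cite{Yang2019}—so there is no in-paper argument to compare against; your reduction to the monotonicity of $f(x)=\bigl((1+x)^t-1\bigr)/x^t$ via the substitution $u=1/x$, i.e.\ the sign of $g'(u)=t\bigl[(1+u)^{t-1}-u^{t-1}\bigr]$ for $g(u)=(1+u)^t-u^t$, is exactly the standard argument used in that reference and its predecessors, with the $x=0$ endpoint handled separately as you do. Your side remark that the constraints $k\leqslant 1$ in (a), (c) and $k\geqslant 1$ in (b) play no role in the proof (only the ordering of $x$ and $k$ and the range of $t$ matter) is also accurate; those restrictions merely reflect the regime in which the lemma is applied later.
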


\subsection{Tighter  monogamy relations in terms of T$q$E}\label{section3-1}

For any nonnegative integer $j$ and its binary expansion $j=\sum\limits_{i=0}^{s-1} j_i 2^i$, with $\log_{2}j \leqslant s$ and $j_i \in \{0, 1\}$ for $i=0, \cdots, s-1$,
one can define a unique binary vector
$\overrightarrow{j}=\left(j_0,~ j_1,~\cdots,~j_{s-1}\right)$. The Hamming weight $\omega_{H}(\overrightarrow{j})$ of $\overrightarrow{j}$ is defined as the number of $1's$ in $\{j_0,~j_1,~\cdots,~j_{s-1}\}$.
The Hamming weight $\omega_{H}(\overrightarrow{j})$ is bounded above by $\log_{2}j$,
\begin{equation}\label{weight}
	\omega_{H}\left(\overrightarrow{j}\right)\leqslant \log_{2}j \leqslant j.
\end{equation}

In the following we denote by $\rho_{A_{i_1}A_{i_2}\cdots A_{i_{m}}}$ the reduced density matrices of a GWV state $\ket{\psi}_{A_1\cdots A_n}$ given in (\ref{GWV}), and $\{P,P_0,P_1,\cdots,P_{r-1}\}$ a partition of the set $\{A_{i_1},A_{i_2},\cdots,A_{i_m}\}$, $r\leqslant m-1\leqslant n-1$.

\begin{theorem}\label{thm10}
	If \begin{equation}\label{thm10:1}
		k T_q^2(\rho_{PP_j})\geqslant T_q^2(\rho_{PP_{j+1}})\geqslant 0
	\end{equation}
	for $j=0,1,\cdots, r-2$ and $0<k\leqslant 1$, we have
	\begin{eqnarray}\label{thm10:2}
		T_q^\beta(\rho_{P|P_0\cdots P_{r-1}})\geqslant\sum\limits_{j=0}^{r-1}(\mathcal{K}_\beta)^{\omega_H(\overrightarrow{j})}T_q^\beta(\rho_{ PP_j}),
	\end{eqnarray}
	where $\beta\in[2, \infty)$, $q\in [\frac{5-\sqrt{13}}{2},\frac{5+\sqrt{13}}{2}]$, $\mathcal{K}_\beta=\frac{(1+k)^\frac{\beta}{2}-1}{k^\frac{\beta}{2}}$.
\end{theorem}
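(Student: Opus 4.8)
The plan is to reduce the claim to a purely algebraic inequality and then establish that inequality by a dyadic induction driven by Lemma~\ref{lem1}(a). First I would invoke the monogamy relation (\ref{T3}) at the exponent $\mu=2$, which is admissible for $q\in[\frac{5-\sqrt{13}}{2},\frac{5+\sqrt{13}}{2}]$ and gives
\begin{equation}
	T_q^2(\rho_{P|P_0\cdots P_{r-1}})\geqslant \sum_{j=0}^{r-1} T_q^2(\rho_{PP_j}).
\end{equation}
Writing $x_j:=T_q^2(\rho_{PP_j})\geqslant0$ and $t:=\beta/2\geqslant1$, and raising both sides to the power $t$ (which preserves the inequality, since $t\geqslant1$ and both sides are nonnegative), it suffices to prove the combinatorial inequality
\begin{equation}\label{plan:comb}
	\Big(\sum_{j=0}^{r-1}x_j\Big)^{t}\geqslant \sum_{j=0}^{r-1}(\mathcal{K}_\beta)^{\omega_H(\overrightarrow{j})}\,x_j^{t},
\end{equation}
under the hypothesis $k x_j\geqslant x_{j+1}\geqslant0$, where $\mathcal{K}_\beta=\frac{(1+k)^t-1}{k^t}$. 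Note that $0<k\leqslant1$ forces $x_0\geqslant x_1\geqslant\cdots\geqslant x_{r-1}$, a monotonicity I would use repeatedly.

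I would prove (\ref{plan:comb}) by strong induction on $r$, splitting the index set dyadically. The base case $r=1$ is immediate because $\omega_H(\overrightarrow{0})=0$. For $r\geqslant2$, fix the integer $s\geqslant1$ with $2^{s-1}<r\leqslant2^{s}$ and set $A=\sum_{j=0}^{2^{s-1}-1}x_j$ and $B=\sum_{j=2^{s-1}}^{r-1}x_j$. The crucial observation is that the consecutive-ratio hypothesis upgrades to a block-ratio bound: for each admissible $i$ one has $x_{2^{s-1}+i}\leqslant x_{i+1}\leqslant k x_i$, the first inequality because the sequence is nonincreasing and $2^{s-1}+i\geqslant i+1$, and the second by hypothesis; summing over $i$ yields $B\leqslant kA$, hence $B/A\leqslant k$ (when $A>0$). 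Applying Lemma~\ref{lem1}(a) with $x=B/A\in[0,k]$ and exponent $t\geqslant1$ gives
\begin{equation}
	(A+B)^{t}\geqslant A^{t}+\mathcal{K}_\beta\,B^{t}.
\end{equation}
The induction hypothesis, applied to the $2^{s-1}$ terms of $A$ and (after reindexing) to the at most $2^{s-1}$ terms of $B$, bounds $A^{t}$ and $B^{t}$ below by the corresponding Hamming-weighted sums. Finally the identity $\omega_H(\overrightarrow{2^{s-1}+i})=1+\omega_H(\overrightarrow{i})$, valid for $i<2^{s-1}$, converts the extra factor $\mathcal{K}_\beta$ multiplying $B^{t}$ into exactly the right Hamming weight for the high-index block; reassembling the two partial sums produces (\ref{plan:comb}).

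I expect the main obstacle to be the bookkeeping of the dyadic induction rather than any analytic difficulty: one must check that the block split is legitimate for every $r$ and not only for powers of two, verify the block-ratio bound $B\leqslant kA$ carefully from the consecutive hypothesis together with monotonicity, confirm the Hamming-weight shift identity for indices below $2^{s-1}$, and treat the degenerate situation in which some $x_j$ (hence all later ones) vanish so that $A=0$. Once (\ref{plan:comb}) is secured, substituting back $x_j=T_q^2(\rho_{PP_j})$ and $\big(\sum_j x_j\big)^{t}\leqslant T_q^{\beta}(\rho_{P|P_0\cdots P_{r-1}})$ completes the argument. The hypotheses on the parameters enter in exactly two places: the range of $q$ guarantees the validity of (\ref{T3}), and the constraint $\beta\geqslant2$ (so $t=\beta/2\geqslant1$) justifies both the power step and the use of Lemma~\ref{lem1}(a).
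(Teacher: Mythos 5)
Your proposal is correct, and it shares the paper's skeleton: reduce via the monogamy relation (\ref{T3}) at $\mu=2$ to a purely algebraic inequality, then run a dyadic induction powered by Lemma~\ref{lem1}~$\mathrm{(a)}$ together with the Hamming-weight shift $\omega_H(\overrightarrow{2^{s-1}+i})=\omega_H(\overrightarrow{i})+1$. Where you genuinely diverge is in how arbitrary $r$ is handled. The paper proves the algebraic inequality only for $r=2^s$ exactly (induction on $s$, both blocks of size $2^{s-1}$, block ratio bounded by chaining the hypothesis $2^{s-1}$ times to get $k^{2^{s-1}}\leqslant k$), and then extends to general $r$ by a padding argument: it tensors $\rho_{PP_0\cdots P_{r-1}}$ with an arbitrary state $\sigma_{P_r\cdots P_{2^s-1}}$, applies the power-of-two result to the enlarged state $\gamma$, and uses $T_q(\gamma_{PP_j})=0$ for $j\geqslant r$ together with $T_q(\gamma_{P|P_0\cdots P_{2^s-1}})=T_q(\rho_{P|P_0\cdots P_{r-1}})$. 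Your strong induction on arbitrary $r$, with blocks of sizes $2^{s-1}$ and $r-2^{s-1}\leqslant 2^{s-1}$ and the block-ratio bound $B\leqslant kA$ obtained from monotonicity plus a single application of the hypothesis ($x_{2^{s-1}+i}\leqslant x_{i+1}\leqslant kx_i$), eliminates the padding step entirely. What your route buys: it is purely arithmetic, uniform in $r$, and sidesteps the point (glossed over in the paper) that the inequality (\ref{T3}) must still be available for the padded state $\gamma$, which is a tensor product rather than literally a reduced GWV state; the paper implicitly justifies this via the same vanishing/equality facts it lists afterwards. What the paper's route buys: a simpler induction parameter and conformity with the standard Hamming-weight template of this literature. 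One small point common to both: the divisions by $T_q^2(\rho_{PP_0})$ (or by $A$) require dispatching the degenerate case where these vanish, which you flag explicitly and the paper does not.
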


\begin{proof}
	From inequality (\ref{T3}), one has $T_q^2(\rho_{P|P_0\cdots P_{r-1}})\geqslant \sum\limits_{j=0}^{r-1} T_q^2(\rho_{PP_j})$. Thus, it is sufficient to show that
	\begin{equation}\label{thm10:3}
		\Bigg(\sum\limits_{j=0}^{r-1} T_q^2(\rho_{PP_j})\Bigg)^{\frac{\beta}{2}}
		\geqslant\sum\limits_{j=0}^{r-1}(\mathcal{K}_\beta)^{\omega_H(\overrightarrow{j})}T_q^\beta(\rho_{ PP_j}).
	\end{equation}
	
	First, we prove that the inequality (\ref{thm10:3}) holds for the case of $r=2^s$ by using mathematical induction on $s$.
	For $s=1$, using Lemma \ref{lem1} $\mathrm{(a)}$, we have
	\begin{eqnarray}\label{thm10:4}
		(T_q^2(\rho_{PP_0})+T_q^2(\rho_{PP_1}))^{\frac{\beta}{2}}
		&=&T_q^\beta(\rho_{PP_0}) \Big(1+\frac{T_q^2(\rho_{PP_1})}{T_q^2(\rho_{PP_0})}\Big)^{\frac{\beta}{2}} \nonumber \\
		&\geqslant& T_q^\beta(\rho_{PP_0}) \Bigg[1+\mathcal{K}_\beta\Bigg(\frac{T_q^\beta(\rho_{PP_1})}{T_q^\beta(\rho_{PP_0})}\Bigg)\Bigg] \nonumber \\
		&=&T_q^\beta(\rho_{PP_0})+\mathcal{K}_\beta T_q^\beta(\rho_{PP_1}).
	\end{eqnarray}
	Thus, the inequality (\ref{thm10:3}) holds for $s=1$.
	
	Assume that the inequality (\ref{thm10:3}) holds for $r=2^{s-1}$ with $s\geqslant 2$. Consider the case of $r=2^s$.
	From  (\ref{thm10:1}) we have $T_q^2(\rho_{PP_{j+2^{s-1}}})\leqslant k^{2^{s-1}}T_q^2(\rho_{PP_j})$ for $j=0,1,\cdots, 2^{s-1}-1$.
	Therefore,
	\begin{equation*}
		\frac{\sum\nolimits_{j=2^{s-1}}^{2^s-1}T_q^2(\rho_{PP_j})}{\sum\nolimits_{j=0}^{2^{s-1}-1}
			T_q^2(\rho_{PP_j})}\leqslant k^{2^{s-1}}\leqslant k\leqslant1.
	\end{equation*}
	
	Again using Lemma \ref{lem1} $\mathrm{(a)}$, we have
	\begin{eqnarray}\label{thm10:5}
		\Bigg(\sum\limits_{j=0}^{2^s-1}T_q^2(\rho_{PP_j})\Bigg)^{\frac{\beta}{2}}
		&=&\Bigg(\sum\limits_{j=0}^{2^{s-1}-1}T_q^2(\rho_{PP_j})\Bigg)^{\frac{\beta}{2}}
		\Bigg(1+\frac{\sum_{j=2^{s-1}}^{2^s-1}T_q^2(\rho_{PP_j})}{\sum_{j=0}^{2^{s-1}-1}T_q^2
			(\rho_{PP_j})}\Bigg)^{\frac{\beta}{2}} \nonumber\\
		&\geqslant& \Bigg(\sum\limits_{j=0}^{2^{s-1}-1}T_q^2(\rho_{PP_j})\Bigg)^{\frac{\beta}{2}}
		\Bigg[1+\mathcal{K}_\beta\Bigg(\frac{\sum_{j=2^{s-1}}^{2^s-1}T_q^2(\rho_{PP_j})}{\sum_{j=0}^{2^{s-1}-1}T_q^2
			(\rho_{PP_j})}\Bigg)^{\frac{\beta}{2}}\Bigg] \nonumber\\
		&=&\Bigg(\sum\limits_{j=0}^{2^{s-1}-1}T_q^2(\rho_{PP_j})\Bigg)^{\frac{\beta}{2}}+\mathcal{K}_\beta\Bigg(\sum\limits_{j=2^{s-1}}^{2^s-1}T_q^2(\rho_{PP_j})\Bigg)
		^{\frac{\beta}{2}}.
	\end{eqnarray}
	From the induction hypothesis, we have
	\begin{equation}\label{thm10:6}
		\Bigg(\sum\limits_{j=0}^{2^{s-1}-1}T_q^2(\rho_{PP_j})\Bigg)^{\frac{\beta}{2}} \geqslant
		\sum\limits_{j=0}^{2^{s-1}-1}(\mathcal{K}_\beta)^{\omega_H(\overrightarrow{j})}
		T_q^\beta(\rho_{PP_j}).
	\end{equation}
	By relabeling the subsystems, we can easily get
	\begin{equation}\label{thm10:7}
		\Bigg(\sum\limits_{j=2^{s-1}}^{2^s-1}T_q^2(\rho_{PP_j})\Bigg)^{\frac{\beta}{2}} \geqslant
		\sum\limits_{j=2^{s-1}}^{2^s-1}(\mathcal{K}_\beta)^{\omega_H(\overrightarrow{j})-1}T_q^\beta(\rho_{PP_j}).
	\end{equation}
	From inequality (\ref{thm10:5}) together with inequalities (\ref{thm10:6}) and (\ref{thm10:7}), we have
	\begin{equation}
		\Bigg(\sum\limits_{j=0}^{2^s-1}T_q^2(\rho_{PP_j})\Bigg)^{\frac{\beta}{2}}\geqslant
		\sum\limits_{j=0}^{2^s-1}(\mathcal{K}_\beta)^{\omega_H(\overrightarrow{j})}T_q^\beta(\rho_{PP_j}).
	\end{equation}
	
	Now we extend the above conclusion to arbitrary integer $r$. Note that there always exists some $s$ such that $0<r\leqslant 2^s$. Let us consider a $(2^s+1)$-partite quantum state,
	\begin{equation}\label{thm10:8}
		\gamma_{PP_0P_1\ldots P_{2^s-1}}=\rho_{PP_0P_1\cdots P_{r-1}}\otimes \sigma_{P_r\cdots P_{2^s-1}},
	\end{equation}
	which is the tensor product of $\rho_{PP_0P_1\cdots P_{r-1}}$ and an arbitrary $(2^s-r)$-partite state $\sigma_{P_r\cdots P_{2^s-1}}$.
	As just proved above for this state, we have
	\begin{equation}
		T_q^\beta(\gamma_{P|P_0P_1\cdots P_{2^s-1}})
		\geqslant\sum\limits_{j=0}^{2^s-1}(\mathcal{K}_\beta)^{\omega_H(\overrightarrow{j})}T_q^\beta(\gamma_{PP_j}),
	\end{equation}
	where $\gamma_{PP_j}$ is the reduced density matrix of $\gamma_{PP_0P_1\cdots P_{2^s-1}}$, $j=0,1,\cdots,2^s-1$.
	
	Taking into account the following obvious facts: $T_q\left(\gamma_{P|P_0 P_1 \cdots P_{2^s-1}}\right)=T_q\left(\rho_{P|P_0 P_1 \cdots P_{r-1}}\right)$,
	$T_q\left(\gamma_{PP_j}\right)=0$ for $j=r, \cdots , 2^s-1$,
	and $T_q(\gamma_{PP_j})=T_q(\rho_{PP_j})$ for each $j=0, \cdots , r-1$, we get
	\begin{eqnarray}
		T_q^\beta(\rho_{P|P_0P_1\cdots P_{r-1}}) &=&T_q^\beta(\gamma_{P|P_0P_1\cdots P_{2^s-1}}) \nonumber\\
		&\geqslant& \sum\limits_{j=0}^{2^s-1}(\mathcal{K}_\beta)^{\omega_H(\overrightarrow{j})}T_q^\beta(\gamma_{PP_j}) \nonumber\\
		&= &\sum\limits_{j=0}^{r-1}(\mathcal{K}_\beta)^{\omega_H(\overrightarrow{j})}T_q^\beta(\rho_{PP_j}).
	\end{eqnarray}
	This completes the proof.
\end{proof}

\begin{remark} Since $(\mathcal{K}_\beta)^{\omega_H(\overrightarrow{j})}\geqslant 1$ for  any $\beta\geqslant2$, we have
	\begin{equation}\label{re3}
		T_q^\beta(\rho_{P|P_0\cdots P_{r-1}})\geqslant\sum\limits_{j=0}^{r-1}(\mathcal{K}_\beta)^{\omega_H(\overrightarrow{j})}T_q^\beta(\rho_{ PP_j})\geqslant\sum\limits_{j=0}^{r-1}T_q^\beta(\rho_{ PP_j}).
	\end{equation}
	Therefore, we provide a monogamy relation based on TqE with larger lower bound than  (\ref{T3}) in Ref.~\cite{Shi2020} .
	
	In the Theorem \ref{thm10}	when $\kappa=1$,  for  any GWV states in the order of  the partitions $P_0, P_1, \cdots, P_{r-1}$ satisfying   $ T_q(\rho_{PP_j})\geqslant T_q(\rho_{PP_{j+1}})\geqslant 0$, $j=0,1,\cdots, r-2$ ,  we get
	\begin{equation}
		T_q^\beta(\rho_{P|P_0\cdots P_{r-1}})\geqslant\sum\limits_{j=0}^{r-1}(2^{\frac{\beta}{2}}-1)^{\omega_H(\overrightarrow{j})}T_q^\beta(\rho_{ PP_j}),
	\end{equation}
	which is tighter than the inequality
	(\ref{T3}) in \cite{Shi2020}.
	When  $0<\kappa<1$, for the GWV states  satisfying certain conditions (\ref{thm10:1}),  we can also improve the monogamy  relations  of Ref. \cite{Shi2020}  from (\ref{re3}).
	
	Furthermore, as  $\frac{(1+k)^\frac{\beta}{2}-1}{k^\frac{\beta}{2}}$ is a decreasing  function of $k$ for $k\in(0,1]$, $\beta\geqslant2$, the inequality (\ref{thm10:2}) gets tighter as $k$ decreases.
\end{remark}

\begin{theorem}\label{thm11}
	When $q\in [\frac{5-\sqrt{13}}{2},\frac{5+\sqrt{13}}{2}]$, we have
	\begin{eqnarray}\label{thm11:1}
		T_q^\beta(\rho_{P|P_0\cdots P_{r-1}})\geqslant\sum\limits_{j=0}^{r-1}(\mathcal{K}_\beta)^{j}T_q^\beta(\rho_{ PP_j})
	\end{eqnarray}
	conditioned that
	\begin{equation}\label{thm11:2}
		k T_q^2(\rho_{PP_l})\geqslant \sum\limits_{j=l+1}^{r-1}T_q^2(\rho_{PP_{j}})
	\end{equation}
	for $l=0,1,\cdots, r-2$, $0<k\leqslant1$, where  $\beta\in[2,\infty)$ and $\mathcal{K}_\beta=\frac{(1+k)^\frac{\beta}{2}-1}{k^\frac{\beta}{2}}$.
\end{theorem}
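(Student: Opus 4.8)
The plan is to mirror the two-stage strategy of Theorem~\ref{thm10}: first reduce the claim to a purely numerical inequality among the quantities $T_q^2(\rho_{PP_j})$ by invoking the base monogamy relation (\ref{T3}), and then establish that numerical inequality by induction on $r$. Since (\ref{T3}) gives $T_q^2(\rho_{P|P_0\cdots P_{r-1}})\geqslant \sum_{j=0}^{r-1} T_q^2(\rho_{PP_j})$ and $\beta/2\geqslant 1$, it suffices to prove
\begin{equation*}
	\Bigg(\sum_{j=0}^{r-1} T_q^2(\rho_{PP_j})\Bigg)^{\frac{\beta}{2}}\geqslant\sum_{j=0}^{r-1}(\mathcal{K}_\beta)^{j}T_q^\beta(\rho_{PP_j}).
\end{equation*}

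The structural difference from Theorem~\ref{thm10} is that here the exponent of $\mathcal{K}_\beta$ is the raw index $j$ rather than the Hamming weight $\omega_H(\overrightarrow{j})$, so I cannot split the sum into two halves of size $2^{s-1}$; instead I would peel off one term at a time. I would argue by induction on $r$, with the base case $r=1$ holding trivially with equality. For the inductive step I would factor out the first term, writing $\left(\sum_{j=0}^{r-1}T_q^2(\rho_{PP_j})\right)^{\beta/2}=T_q^\beta(\rho_{PP_0})\left(1+\frac{\sum_{j=1}^{r-1}T_q^2(\rho_{PP_j})}{T_q^2(\rho_{PP_0})}\right)^{\beta/2}$. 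The $l=0$ instance of the hypothesis (\ref{thm11:2}) says exactly that $\frac{\sum_{j=1}^{r-1}T_q^2(\rho_{PP_j})}{T_q^2(\rho_{PP_0})}\leqslant k\leqslant 1$, so Lemma~\ref{lem1}(a) with $t=\beta/2$ applies and yields
\begin{equation*}
	\Bigg(\sum_{j=0}^{r-1}T_q^2(\rho_{PP_j})\Bigg)^{\frac{\beta}{2}}\geqslant T_q^\beta(\rho_{PP_0})+\mathcal{K}_\beta\Bigg(\sum_{j=1}^{r-1}T_q^2(\rho_{PP_j})\Bigg)^{\frac{\beta}{2}}.
\end{equation*}

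To close the induction I would apply the induction hypothesis to the shortened tuple $T_q^2(\rho_{PP_1}),\cdots,T_q^2(\rho_{PP_{r-1}})$, after checking that its required hypotheses are precisely the instances $l=1,\cdots,r-2$ of (\ref{thm11:2}) with indices shifted down by one. This gives $\left(\sum_{j=1}^{r-1}T_q^2(\rho_{PP_j})\right)^{\beta/2}\geqslant\sum_{j=1}^{r-1}(\mathcal{K}_\beta)^{j-1}T_q^\beta(\rho_{PP_j})$; substituting into the previous display and multiplying the tail by $\mathcal{K}_\beta$ raises each exponent $j-1$ to $j$, producing exactly $\sum_{j=0}^{r-1}(\mathcal{K}_\beta)^jT_q^\beta(\rho_{PP_j})$, as desired.

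The only genuinely nontrivial point is the bookkeeping that guarantees Lemma~\ref{lem1}(a) is applicable at every stage: the $l=0$ condition is what controls the ratio of the first term to the sum of the remaining terms, while the conditions for $l\geqslant 1$ are exactly those needed to invoke the induction hypothesis after relabeling. Because the numerical argument uses no property of the GWV state beyond the monogamy bound (\ref{T3}), I expect no analytic obstacle once the index shift is verified to preserve the assumed ordering constraints.
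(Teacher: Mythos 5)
Your proposal is correct and follows essentially the same route as the paper: reduce to the numerical inequality via (\ref{T3}), then induct on $r$ by peeling off the first term, applying Lemma~\ref{lem1}(a) under the $l=0$ condition of (\ref{thm11:2}), and invoking the induction hypothesis on the index-shifted tail (the paper's base case is $r=2$ via (\ref{thm10:4}) rather than your trivial $r=1$, an inessential difference). Your explicit check that the $l\geqslant 1$ conditions are exactly the hypotheses for the shortened tuple is the same bookkeeping the paper leaves implicit.
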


\begin{proof}
	From inequality (\ref{T3}),  we only need to prove
	\begin{equation}\label{thm11:3}
		\Big(\sum_{j=0}^{r-1}T_q^2\left(\rho_{PP_j}\right)\Big)^{\frac{\beta}{2}}
		\geqslant \sum_{j=0}^{r-1} ( \mathcal{K}_\beta)^{j}T_q^\beta(\rho_{PP_j}).
	\end{equation}
	We use mathematical induction on $r$ here. It is obvious that inequality (\ref{thm11:3}) holds for $r=2$ from (\ref{thm10:4}). Assume that it also holds for any positive integer less than $r$. Since $\frac{\sum\limits_{j=1}^{r-1}T_q^2(\rho_{PP_{j}})}{T_q^2(\rho_{PP_0})}\leqslant k$, we have
	\begin{eqnarray}
		\left(\sum_{j=0}^{r-1}T_q^2\left(\rho_{PP_j}\right)\right)^{\frac{\beta}{2}}
		&=&T_q^\beta(\rho_{PP_0})
		\Bigg(1+\frac{\sum_{j=1}^{r-1}T_q^2(\rho_{PP_j})}
		{T_q^2(\rho_{PP_0})} \Bigg)^{\frac{\beta}{2}}\nonumber\\
		&\geqslant& T_q^\beta(\rho_{PP_0})\Bigg[1+ \mathcal{K}_\beta\Bigg(\frac{\sum\nolimits_{j=1}^{r-1}T_q^2(\rho_{PP_{j}})}{T_q^2(\rho_{PP_0})}\Bigg)^{\frac{\beta}{2}}\Bigg]\nonumber\\
		&=&T_q^\beta(\rho_{PP_0})+\mathcal{K}_\beta\Bigg(\sum\limits_{j=1}^{r-1}T_q^2(\rho_{PP_{j}})\Bigg)^{\frac{\beta}{2}}\nonumber\\
		&\geqslant& T_q^\beta(\rho_{PP_0})+\mathcal{K}_\beta \sum\limits_{j=1}^{r-1}(\mathcal{K}_\beta)^{j-1}T_q^\beta(\rho_{ PP_j})\nonumber\\
		&=&\sum\limits_{j=0}^{r-1}(\mathcal{K}_\beta)^{j}T_q^\beta(\rho_{ PP_j}),
	\end{eqnarray}
	where the first inequality is due to Lemma \ref{lem1} $\mathrm{(a)}$ and the second inequality is due to the induction hypothesis.
\end{proof}
According to inequality (\ref{weight}), we obtain
\begin{equation*}
	T_q^\beta(\rho_{P|P_0\cdots P_{r-1}})\geqslant\sum\limits_{j=0}^{r-1}(\mathcal{K}_\beta)^{j}T_q^\beta(\rho_{ PP_j})\geqslant\sum\limits_{j=0}^{r-1}(\mathcal{K}_\beta)^{\omega_H(\overrightarrow{j})}T_q^\beta(\rho_{ PP_j})
\end{equation*}
for $\beta\geqslant2$. Therefore the inequality (\ref{thm11:1}) of Theorem \ref{thm11} is tighter than  the inequality (\ref{thm10:2}) of Theorem \ref{thm10} under certain conditions.

In general, the conditions (\ref{thm11:2}) is not always satisfied. We derive the following monogamy inequality with different conditions.

\begin{theorem}\label{thm14}
	When $q\in [\frac{5-\sqrt{13}}{2},\frac{5+\sqrt{13}}{2}]$, we have
	\begin{eqnarray}\label{thm14:1}
		T_q^\beta(\rho_{P|P_0\cdots P_{r-1}})&\geqslant& \sum_{j=0}^{t}(\mathcal{K}_\beta)^{j}T_q^\beta(\rho_{PP_j})+(\mathcal{K}_\beta )^{t+2}\sum_{j=t+1}^{r-2}T_q^\beta(\rho_{PP_j})\nonumber\\
		&&\ \ \
		+(\mathcal{K}_\beta)^{t+1}T_q^\beta(\rho_{PP_{r-1}})
	\end{eqnarray}
	conditioned that
	$k T_q^2(\rho_{PP_i})\geqslant T_q^2(\rho_{P|P_{i+1}\cdots P_{r-1}})$ for $i=0,1,\cdots, t$ and $T_q^2(\rho_{PP_j})\leqslant k T_q^2(\rho_{P|P_{j+1}\cdots P_{r-1}})$ for $j=t+1,\cdots, r-2$, $\forall  0<k\leqslant1$, $0\leqslant t\leqslant r-3$, $r\geqslant3$, where $\beta\in[2,\infty)$ and $\mathcal{K}_\beta=\frac{(1+k)^\frac{\beta}{2}-1}{k^\frac{\beta}{2}}$.
\end{theorem}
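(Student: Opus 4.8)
The plan is to follow the same reduction used in the proofs of Theorems~\ref{thm10} and \ref{thm11}: the basic monogamy relation (\ref{T3}) with $\mu=2$ gives not only $T_q^2(\rho_{P|P_0\cdots P_{r-1}})\geqslant\sum_{j=0}^{r-1}T_q^2(\rho_{PP_j})$ but, more usefully, its two-term form
\begin{equation*}
T_q^2(\rho_{P|P_i P_{i+1}\cdots P_{r-1}})\geqslant T_q^2(\rho_{PP_i})+T_q^2(\rho_{P|P_{i+1}\cdots P_{r-1}}),
\end{equation*}
obtained by grouping $P_{i+1}\cdots P_{r-1}$ into a single party (legitimate because the grouped state is again a GWV state, so (\ref{T3}) still applies). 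Since $T_q^\beta(\rho_{P|P_0\cdots P_{r-1}})=\big(T_q^2(\rho_{P|P_0\cdots P_{r-1}})\big)^{\beta/2}$, it suffices to bound this quantity from below. I would abbreviate $x_j=T_q^2(\rho_{PP_j})$ and $T_i=T_q^2(\rho_{P|P_i\cdots P_{r-1}})$, so that the two-term monogamy reads $T_i\geqslant x_i+T_{i+1}$ with $T_{r-1}=x_{r-1}$, while the hypotheses become $T_{i+1}\leqslant k\,x_i$ for $i=0,\dots,t$ and $x_j\leqslant k\,T_{j+1}$ for $j=t+1,\dots,r-2$.

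The core is a two-regime telescoping built on Lemma~\ref{lem1}(a) with exponent $\beta/2\geqslant1$. In the first (``dominant'') regime $0\leqslant i\leqslant t$, I would factor out $x_i$:
\begin{equation*}
T_i^{\beta/2}\geqslant (x_i+T_{i+1})^{\beta/2}=x_i^{\beta/2}\Big(1+\tfrac{T_{i+1}}{x_i}\Big)^{\beta/2}\geqslant x_i^{\beta/2}+\mathcal{K}_\beta\,T_{i+1}^{\beta/2},
\end{equation*}
where Lemma~\ref{lem1}(a) applies because $0\leqslant T_{i+1}/x_i\leqslant k\leqslant1$. Iterating from $i=0$ to $i=t$ yields
\begin{equation*}
T_0^{\beta/2}\geqslant\sum_{i=0}^{t}(\mathcal{K}_\beta)^{i}x_i^{\beta/2}+(\mathcal{K}_\beta)^{t+1}T_{t+1}^{\beta/2}.
\end{equation*}
In the second (``dominated'') regime $t+1\leqslant j\leqslant r-2$ the inequality $x_j\leqslant kT_{j+1}$ reverses which factor is larger, so I would instead factor out the tail $T_{j+1}$, obtaining $T_j^{\beta/2}\geqslant T_{j+1}^{\beta/2}+\mathcal{K}_\beta\,x_j^{\beta/2}$ from the same lemma applied to $x_j/T_{j+1}\leqslant k\leqslant1$. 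Telescoping this from $j=t+1$ to $r-2$ and using $T_{r-1}=x_{r-1}$ gives
\begin{equation*}
T_{t+1}^{\beta/2}\geqslant x_{r-1}^{\beta/2}+\mathcal{K}_\beta\sum_{j=t+1}^{r-2}x_j^{\beta/2}.
\end{equation*}

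Substituting this into the first-regime estimate and recalling $x_j^{\beta/2}=T_q^\beta(\rho_{PP_j})$ produces exactly (\ref{thm14:1}), with $(\mathcal{K}_\beta)^{t+1}$ on the boundary term $T_q^\beta(\rho_{PP_{r-1}})$ and the uniform power $(\mathcal{K}_\beta)^{t+2}$ on the middle block. The step I expect to be the main obstacle is the bookkeeping at the junction of the two regimes: one must verify that the factor $(\mathcal{K}_\beta)^{t+1}$ generated by the first telescoping multiplies the \emph{entire} second-regime bound, thereby promoting the internal $\mathcal{K}_\beta$ to $(\mathcal{K}_\beta)^{t+2}$ on the middle terms while leaving $(\mathcal{K}_\beta)^{t+1}$ on $x_{r-1}$. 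A secondary technical point is the degenerate case where some $x_i=0$, which makes the ratio $T_{i+1}/x_i$ undefined; this is handled by a direct check or a limiting argument, just as the corresponding ratios are treated implicitly in (\ref{thm10:4}) and in the proof of Theorem~\ref{thm11}.
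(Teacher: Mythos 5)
Your proposal is correct and takes essentially the same route as the paper's own proof: the paper iterates Theorem~\ref{thm10} in its $r=2$ form (which is exactly your two-term step obtained from (\ref{T3}) plus Lemma~\ref{lem1}(a) after grouping the tail into one party) over the first regime, applies it with the roles of the two terms swapped over the second regime, and then substitutes one telescoped bound into the other. Your bookkeeping at the junction matches the paper's combination step, where the prefactor $(\mathcal{K}_\beta)^{t+1}$ multiplies the whole second-regime bound and yields $(\mathcal{K}_\beta)^{t+2}$ on the middle block and $(\mathcal{K}_\beta)^{t+1}$ on the last term, exactly as in (\ref{thm14:1}).
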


\begin{proof}
	From Theorem \ref{thm10} for the case $r=2$, we have
	\begin{eqnarray}\label{thm14:2}
		T_q^\beta(\rho_{P|P_0\cdots P_{r-1}})
		&\geqslant& T_q^\beta(\rho_{PP_0})+\mathcal{K}_\beta T_q^\beta(\rho_{P|P_1\cdots P_{r-1}})\nonumber\\
		&\geqslant& \cdots\nonumber\\
		&\geqslant& \sum_{j=0}^{t}(\mathcal{K}_\beta)^{j}T_q^\beta(\rho_{PP_j})+(\mathcal{K}_\beta)^{t+1}T_q^\beta(\rho_{P|P_{t+1}\cdots P_{r-1}}).
	\end{eqnarray}
	Since $T_q^2(\rho_{PP_j}) \leqslant k T_q^2(\rho_{P|P_{j+1}\cdots P_{r-1}})$ for $j=t+1,\cdots, r-2$, using Theorem \ref{thm10} again we have
	\begin{eqnarray}\label{thm14:3}
		T_q^\beta(\rho_{P|P_{t+1}\cdots P_{r-1}})
		&\geqslant&\mathcal{K}_\beta T_q^\beta(\rho_{PP_{t+1}})+T_q^\beta(\rho_{P|P_{t+2}\cdots P_{r-1}})\nonumber\\
		&\geqslant& \cdots\nonumber\\
		&\geqslant& \mathcal{K}_\beta\left(\sum_{j=t+1}^{r-2}T_q^\beta(_{PP_j})\right)+T_q^\beta(_{PP_{r-1}}).
	\end{eqnarray}
	
	Combining (\ref{thm14:2}) and (\ref{thm14:3}), we get the inequality (\ref{thm14:1}).
\end{proof}

\begin{remark}
	From Theorem \ref{thm14}, if $k T_q^2(\rho_{PP_j})\geqslant T_q^2(\rho_{P|P_{j+1}\cdots P_{r-1}})$ for all $j=0,1,\cdots, r-2$,  one has
	\begin{eqnarray}\label{thm14:4}
		T_q^\beta(\rho_{P|P_0\cdots P_{r-1}})\geqslant \sum_{j=0}^{r-1}(\mathcal{K}_\beta)^{j}T_q^\beta(\rho_{PP_j}).
	\end{eqnarray}
\end{remark}

Next, using Lemma \ref{lem1} $\mathrm{(b)}$, we further improve the monogamy inequality (\ref{T4}) provided in \cite{Shi2020}.

\begin{lemma}\label{lem2}
	If $T_q^\mu(\rho_{P_1P_3}) \geqslant k T_q^\mu(\rho_{P_1P_2})$, we have for $q\in[\frac{5-\sqrt{13}}{2},\frac{5+\sqrt{13}}{2}]$,
	\begin{eqnarray}\label{lem2:1}
		T_q^\gamma(\rho_{P_1|P_2P_3})\geqslant T_q^\gamma(\rho_{P_1P_2})+ \mathcal{K}_\gamma T_q^\gamma(\rho_{P_1P_3}),
	\end{eqnarray}
	where $\gamma\in[0,\mu]$, $\mu\in[2,\infty)$, $\mathcal{K}_\gamma=\frac{(1+k)^\frac{\gamma}{\mu}-1}{k^{\frac{\gamma}{\mu}}}$ and $k\in[1,\infty)$.
\end{lemma}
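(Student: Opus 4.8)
The plan is to reduce the claimed operator-free inequality to a single-variable numerical estimate and then invoke Lemma~\ref{lem1}(b). First I would apply the basic monogamy relation~(\ref{T3}) in the two-block case $r=3$, which is legitimate since $\mu\in[2,\infty)$, to obtain $T_q^\mu(\rho_{P_1|P_2P_3})\geqslant T_q^\mu(\rho_{P_1P_2})+T_q^\mu(\rho_{P_1P_3})$. Abbreviating $a=T_q^\mu(\rho_{P_1P_2})$ and $b=T_q^\mu(\rho_{P_1P_3})$ and setting $t=\gamma/\mu$, the hypothesis $\gamma\in[0,\mu]$ forces $t\in[0,1]$, so $x\mapsto x^{t}$ is nondecreasing on $[0,\infty)$. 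Raising the monogamy bound to the power $t$ then gives $T_q^\gamma(\rho_{P_1|P_2P_3})=\big(T_q^\mu(\rho_{P_1|P_2P_3})\big)^{t}\geqslant (a+b)^{t}$, using $T_q^\gamma=(T_q^\mu)^{t}$.

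It therefore suffices to establish the scalar inequality $(a+b)^{t}\geqslant a^{t}+\mathcal{K}_\gamma\, b^{t}$ with $\mathcal{K}_\gamma=\frac{(1+k)^{t}-1}{k^{t}}$, since $T_q^\gamma(\rho_{P_1P_2})=a^{t}$ and $T_q^\gamma(\rho_{P_1P_3})=b^{t}$. Assuming $a>0$, I would factor out $a^{t}$ and write $(a+b)^{t}=a^{t}(1+x)^{t}$ with $x=b/a$. The hypothesis $T_q^\mu(\rho_{P_1P_3})\geqslant k\,T_q^\mu(\rho_{P_1P_2})$ is exactly $x\geqslant k$, and because $k\in[1,\infty)$ we have $x\geqslant k\geqslant1$ alongside $0\leqslant t\leqslant1$, so the hypotheses of Lemma~\ref{lem1}(b) are satisfied. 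That lemma yields $(1+x)^{t}\geqslant 1+\frac{(1+k)^{t}-1}{k^{t}}x^{t}$; multiplying through by $a^{t}$ and using $a^{t}x^{t}=b^{t}$ produces precisely $(a+b)^{t}\geqslant a^{t}+\mathcal{K}_\gamma b^{t}$, which is the assertion.

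The only genuine subtlety, and the step I would treat most carefully, is the boundary case $a=T_q^\mu(\rho_{P_1P_2})=0$, where the factorization $x=b/a$ breaks down. There the target collapses to $b^{t}\geqslant\mathcal{K}_\gamma b^{t}$, which holds as soon as $\mathcal{K}_\gamma\leqslant1$; and indeed $\mathcal{K}_\gamma\leqslant1$ follows from the subadditivity $(1+k)^{t}\leqslant 1+k^{t}$ valid for $t\in[0,1]$ (concavity of $x\mapsto x^{t}$ together with its vanishing at the origin). Apart from this degenerate case the proof is a direct chaining of~(\ref{T3}) with Lemma~\ref{lem1}(b); no induction is required, since the statement already involves only the two reduced blocks $P_2$ and $P_3$, and the entire difficulty lies in verifying that the algebraic hypotheses of Lemma~\ref{lem1}(b) ($x\geqslant k\geqslant1$, $t\leqslant1$) line up with the entanglement assumption and the admissible ranges of $q$, $\gamma$, and $\mu$.
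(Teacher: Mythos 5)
Your proposal is correct and follows essentially the same route as the paper's proof: apply the monogamy relation (\ref{T3}) in the two-block case, raise it to the power $\gamma/\mu\in[0,1]$, factor out $T_q^\gamma(\rho_{P_1P_2})$, and invoke Lemma~\ref{lem1}(b) with $x=T_q^\mu(\rho_{P_1P_3})/T_q^\mu(\rho_{P_1P_2})\geqslant k\geqslant 1$. Your explicit handling of the degenerate case $T_q^\mu(\rho_{P_1P_2})=0$ via $\mathcal{K}_\gamma\leqslant 1$ is a small refinement that the paper's proof silently skips, but it does not change the substance of the argument.
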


\begin{proof}
	From (\ref{T3}), we have $T_q^\mu(\rho_{P_1|P_2P_3})\geqslant T_q^\mu(\rho_{P_1P_2})+T_q^\mu(\rho_{P_1P_3})$ for $\mu\in[2,\infty)$. Hence, we get
	\begin{eqnarray} 	 T_q^\gamma(\rho_{P_1|P_2P_3})&=&(T_q^\mu(\rho_{P_1|P_2P_3}))^{\frac{\gamma}{\mu}}\nonumber\\
		&\geqslant&(T_q^\mu(\rho_{P_1P_2})+T_q^\mu(\rho_{P_1P_3}))^{\frac{\gamma}{\mu}}\nonumber\\
		&=&T_q^\gamma(\rho_{P_1P_2})\Bigg[1+\frac{T_q^\mu(\rho_{P_1P_3})}{T_q^\mu(\rho_{P_1P_2})}\Bigg]^{\frac{\gamma}{\mu}}\nonumber\\
		&\geqslant& T_q^\gamma(\rho_{P_1P_2})\Bigg(1+\mathcal{K}_\gamma\Bigg(\frac{T_q(\rho_{P_1P_3})}{T_q(\rho_{P_1P_2})}\Bigg)^{\gamma}\Bigg)\nonumber\\
		&= &T_q^\gamma(\rho_{P_1P_2})+\mathcal{K}_\gamma T_q^\gamma(\rho_{P_1P_3}).
	\end{eqnarray}
	Here the second inequality is due to Lemma \ref{lem1} $\mathrm{(b)}$.
\end{proof}

Now we generalize our results to multipartite GWV states. The proof is similar to the proof of Theorem \ref{thm14} by using Lemma \ref{lem1} $\mathrm{(b)}$ and Lemma \ref{lem2}.

\begin{theorem}\label{thm6}
	If $k T_q^\mu(\rho_{PP_i})\leqslant T_q^\mu(\rho_{P|P_{i+1}\cdots P_{r-1}})$  for $i=0,1,\cdots, t$, and $T_q^\mu(\rho_{PP_j})\geqslant k T_q^\mu(\rho_{P|P_{j+1}\cdots P_{r-1}})$ for $j=t+1,\cdots, r-2$, $ \forall k\geqslant1$, $0\leqslant t\leqslant r-3$ and $r\geqslant 3$, we have for $q\in [\frac{5-\sqrt{13}}{2},\frac{5+\sqrt{13}}{2}]$,
	\begin{eqnarray}\label{thm6:1}
		T_q^\gamma(\rho_{P|P_0\cdots P_{r-1}})&\geqslant&\sum_{j=0}^{t}(\mathcal{K}_\gamma)^{j}T_q^\gamma(\rho_{PP_j})+(\mathcal{K}_\gamma )^{t+2}\sum_{j=t+1}^{r-2}T_q^\gamma(\rho_{PP_j})\nonumber\\
		&&\ \ \ +(\mathcal{K}_\gamma)^{t+1}T_q^\gamma(\rho_{PP_{r-1}})
	\end{eqnarray}
	with $\gamma\in[0,\mu], \mu \in [2,\infty)$ and $\mathcal{K}_\gamma=\frac{(1+k)^\frac{\gamma}{\mu}-1}{k^\frac{\gamma}{\mu}}$.
\end{theorem}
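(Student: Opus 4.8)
The plan is to follow the two-phase ``peeling'' argument of Theorem \ref{thm14} almost verbatim, but with the $r=2$ base relation of Theorem \ref{thm10} replaced throughout by Lemma \ref{lem2} (and hence with Lemma \ref{lem1}(b) supplying the elementary estimate in the regime $\gamma\le\mu$, $k\ge1$ in place of the $\beta\ge2$, $0<k\le1$ regime). At every step I would treat the tail $P_{i+1}\cdots P_{r-1}$ as a single block, so that Lemma \ref{lem2} is applied to the tripartite split $P\mid P_i\mid(P_{i+1}\cdots P_{r-1})$. This is legitimate because, as recalled in the excerpt, the restriction of a GWV state to any partition is again a GWV state, so the base additivity (\ref{T3}) and therefore Lemma \ref{lem2} hold for arbitrary block partitions.

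First phase, indices $0\le i\le t$: here the hypothesis $kT_q^\mu(\rho_{PP_i})\le T_q^\mu(\rho_{P|P_{i+1}\cdots P_{r-1}})$ is exactly the ordering $T_q^\mu(\mathrm{tail})\ge kT_q^\mu(\rho_{PP_i})$ required by Lemma \ref{lem2} with the tail in the $P_3$ slot and $P_i$ in the $P_2$ slot. Lemma \ref{lem2} then yields
\[
T_q^\gamma(\rho_{P|P_i\cdots P_{r-1}})\ge T_q^\gamma(\rho_{PP_i})+\mathcal{K}_\gamma\,T_q^\gamma(\rho_{P|P_{i+1}\cdots P_{r-1}}).
\]
Iterating this from $i=0$ to $i=t$ and collecting the accumulated powers of $\mathcal{K}_\gamma$ gives the partial bound $T_q^\gamma(\rho_{P|P_0\cdots P_{r-1}})\ge \sum_{j=0}^{t}(\mathcal{K}_\gamma)^{j}T_q^\gamma(\rho_{PP_j})+(\mathcal{K}_\gamma)^{t+1}T_q^\gamma(\rho_{P|P_{t+1}\cdots P_{r-1}})$, the exact analogue of (\ref{thm14:2}).

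Second phase, indices $t+1\le j\le r-2$: now the hypothesis reverses to $T_q^\mu(\rho_{PP_j})\ge kT_q^\mu(\rho_{P|P_{j+1}\cdots P_{r-1}})$, so I would apply Lemma \ref{lem2} with the slots interchanged, putting $P_j$ in the $P_3$ slot and the tail in the $P_2$ slot, to obtain $T_q^\gamma(\rho_{P|P_j\cdots P_{r-1}})\ge T_q^\gamma(\rho_{P|P_{j+1}\cdots P_{r-1}})+\mathcal{K}_\gamma T_q^\gamma(\rho_{PP_j})$. Iterating from $j=t+1$ up to $j=r-2$, where the tail collapses to the single block $P_{r-1}$, produces the analogue of (\ref{thm14:3}),
\[
T_q^\gamma(\rho_{P|P_{t+1}\cdots P_{r-1}})\ge \mathcal{K}_\gamma\sum_{j=t+1}^{r-2}T_q^\gamma(\rho_{PP_j})+T_q^\gamma(\rho_{PP_{r-1}}).
\]
Substituting this into the first-phase bound and distributing the factor $(\mathcal{K}_\gamma)^{t+1}$ puts $(\mathcal{K}_\gamma)^{t+2}$ on the middle sum and $(\mathcal{K}_\gamma)^{t+1}$ on the last term, which is precisely (\ref{thm6:1}).

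The hard part will be purely bookkeeping rather than analytic. Because $P_i$ and $P_j$ occupy opposite slots of Lemma \ref{lem2} in the two phases, the factor $\mathcal{K}_\gamma$ attaches to the tail in the first phase but to the single block $P_j$ in the second; this asymmetry is exactly what generates the three distinct exponents $(\mathcal{K}_\gamma)^{j}$, $(\mathcal{K}_\gamma)^{t+2}$ and $(\mathcal{K}_\gamma)^{t+1}$ in the final bound, so I would track it carefully. I would also confirm that the parameter constraints of Lemma \ref{lem2}, namely $\gamma\in[0,\mu]$, $\mu\in[2,\infty)$ and $k\ge1$, are preserved at every application (they are, being index-independent), and check that the degenerate endpoints $t=0$ and $t=r-3$ reduce correctly to one-sided peeling.
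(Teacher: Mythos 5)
Your proposal is correct and is essentially the paper's own argument: the paper states that Theorem~\ref{thm6} is proved "similar to the proof of Theorem~\ref{thm14} by using Lemma~\ref{lem1}(b) and Lemma~\ref{lem2}," which is precisely your two-phase peeling with Lemma~\ref{lem2} applied to the tripartite splits $P\mid P_i\mid(P_{i+1}\cdots P_{r-1})$, with the slots interchanged in the second phase. Your bookkeeping of the exponents $(\mathcal{K}_\gamma)^{j}$, $(\mathcal{K}_\gamma)^{t+2}$, $(\mathcal{K}_\gamma)^{t+1}$ matches the paper's conclusion exactly.
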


\begin{remark}
	Since $\frac{(1+k)^\frac{\gamma}{\mu}-1}{k^{\frac{\gamma}{\mu}}}\geqslant 2^{\frac{\gamma}{\mu}}-1$ for $\frac{\gamma}{\mu}\in[0,1]$ and $k\in[1, \infty)$, our new monogamy relation (\ref{thm6:1}) for  T$q$E  is better than (\ref{T4}) given in \cite{Shi2020} which is just a special case of ours for $k=1$. Moreover, the larger the $k$, the tighter the inequality (\ref{thm6:1}).
\end{remark}

\begin{example}\label{exm1}
	Consider the 3-qubit  GW state
	\begin{eqnarray}
		&\ket{\psi}_{A_1A_2A_3}
		=\frac{1}{\sqrt{6}}\ket{100}+\frac{1}{\sqrt{6}}\ket{010}+\frac{2}{\sqrt{6}}\ket{001}.
	\end{eqnarray}
	From the definition of concurrence \cite{Concurrence1}, we get $C(\rho_{A_1|A_2A_3})=\frac{\sqrt{5}}{3}$, $C(\rho_{A_1A_2})=\frac{1}{3}$ and $C(\rho_{A_1A_3})=\frac{2}{3}$. When $q=2$, using (\ref{T8}) we have
	$T_2(\ket{\psi}_{A_1|A_2A_3})=\frac{5}{18}$, $T_2(\rho_{A_1A_2})=\frac{1}{18}$ and $T_2(\rho_{A_1A_3})=\frac{2}{9}$.
	Choosing $\mu=3$, we have $1\leqslant k\leqslant 64$ from Lemma \ref{lem2}. Thus, $T_2^\gamma(\ket{\psi}_{A_1|A_2A_3})\geqslant \left(\frac{1}{18}\right)^\gamma+\frac{(1+k)^\frac{\gamma}{3}-1}
	{k^{\frac{\gamma}{3}}}\left(\frac{2}{9}\right)^\gamma$ from our result (\ref{lem2:1}), and $T_2^\gamma(\ket{\psi}_{A_1|A_2A_3})\geqslant\left(\frac{1}{18}\right)^\gamma
	+\left(2^{\frac{\gamma}{3}}-1\right)\left(\frac{2}{9}\right)^\gamma$ from the result given in \cite{Shi2020}. One can see that our result is better than the result
	in \cite{Shi2020}  for $\gamma\in[0,3]$, and the inequality is tighter as $k$ increases, see Fig. \ref{Fig1}.
\end{example}

\begin{figure}[H]
	\centering
	\includegraphics[width=8cm]{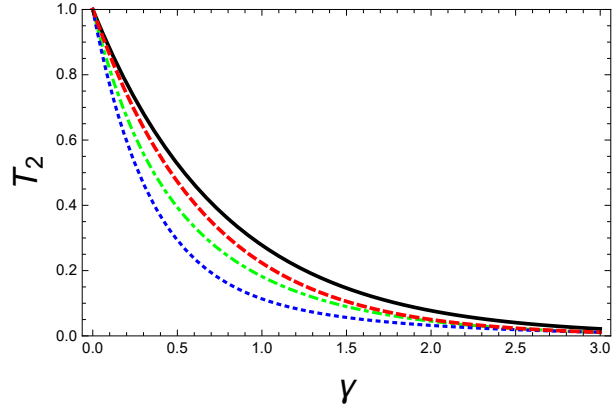}
	\caption{{\small The vertical axis is the the lower bound of the   Tsallis-$q$ entanglement $T_2(\ket{\psi}_{A_1|A_2A_3})$. The  black solid line is the exact values of  $T_2(\ket{\psi}_{A_1|A_2A_3})$. The  red dashed (green dot-dashed) line represents the lower bound from our results for the case of $k=64$ ($k=10$). The blue dotted line represents the lower bound from the result in \cite{Shi2020}.}}
	\label{Fig1}
\end{figure}

\subsection{Tighter  polygamy relations in terms of T$q$EoA}\label{sec3-2}

In this section,  we present the polygamy inequalities for GWV states based on T$q$EoA, which improves the inequality (\ref{T5}).

\begin{theorem}\label{thm7}
	If the subsystems $P_0,P_1,\cdots,P_{r-1}$ satisfy
	\begin{equation}\label{thm7:1}
		k T_q^a(\rho_{PP_j})\geqslant T_q^a(\rho_{PP_{j+1}})\geqslant0
	\end{equation}
	with $j=0,1,\cdots, r-2, 0<k\leqslant 1$, then
	\begin{eqnarray}\label{thm7:2}
		[T_q^a(\rho_{P|P_0\cdots P_{r-1}})]^\mu\leqslant\sum\limits_{j=0}^{r-1}(\mathcal{K}_\mu)^{\omega_H(\overrightarrow{j})}[T_q^a(\rho_{ PP_j})]^\mu,
	\end{eqnarray}
	where $\mu\in (0,1]$, $q\in[\frac{5-\sqrt{13}}{2},2]\cup [3,\frac{5+\sqrt{13}}{2}]$ and  $\mathcal{K}_\mu=\frac{(1+k)^\mu-1}{k^\mu}$.
\end{theorem}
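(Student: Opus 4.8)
The plan is to dualize the proof of Theorem \ref{thm10}, reversing every inequality and replacing Lemma \ref{lem1}(a) by Lemma \ref{lem1}(c). The base relation is now the polygamy inequality (\ref{T5}), which is valid precisely on the range $q\in[\frac{5-\sqrt{13}}{2},2]\cup[3,\frac{5+\sqrt{13}}{2}]$ assumed in the theorem. Taking $\mu=1$ in (\ref{T5}) gives $T_q^a(\rho_{P|P_0\cdots P_{r-1}})\leqslant\sum_{j=0}^{r-1}T_q^a(\rho_{PP_j})$, and since $x\mapsto x^\mu$ is monotonically increasing for $\mu\in(0,1]$, raising both sides to the power $\mu$ yields
\[
[T_q^a(\rho_{P|P_0\cdots P_{r-1}})]^\mu\leqslant\Big(\sum_{j=0}^{r-1}T_q^a(\rho_{PP_j})\Big)^\mu.
\]
It therefore suffices to prove the numerical inequality
\[
\Big(\sum_{j=0}^{r-1}T_q^a(\rho_{PP_j})\Big)^\mu\leqslant\sum_{j=0}^{r-1}(\mathcal{K}_\mu)^{\omega_H(\overrightarrow{j})}[T_q^a(\rho_{PP_j})]^\mu,
\]
the exact polygamy counterpart of (\ref{thm10:3}).

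I would establish this by induction on $s$ for $r=2^s$. For $s=1$, setting $x=T_q^a(\rho_{PP_1})/T_q^a(\rho_{PP_0})$, hypothesis (\ref{thm7:1}) gives $0\leqslant x\leqslant k\leqslant 1$, so Lemma \ref{lem1}(c) with $t=\mu$ gives $(1+x)^\mu\leqslant 1+\mathcal{K}_\mu x^\mu$; multiplying through by $[T_q^a(\rho_{PP_0})]^\mu$ settles the $r=2$ case (the degenerate case $T_q^a(\rho_{PP_0})=0$, which forces $T_q^a(\rho_{PP_1})=0$, is trivial). For the step from $2^{s-1}$ to $2^s$, iterating (\ref{thm7:1}) gives $T_q^a(\rho_{PP_{j+2^{s-1}}})\leqslant k^{2^{s-1}}T_q^a(\rho_{PP_j})$, so the ratio of the upper block of the sum to the lower block is at most $k^{2^{s-1}}\leqslant k\leqslant1$; applying Lemma \ref{lem1}(c) once more bounds the $\mu$-th power of the full sum above by the $\mu$-th power of the lower half-sum plus $\mathcal{K}_\mu$ times the $\mu$-th power of the upper half-sum. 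The induction hypothesis applied to the two half-sums, together with the observation that every index $j$ in the upper block carries a Hamming weight exactly one larger than its partner in the lower block (the extra leading bit), supplies the factor $(\mathcal{K}_\mu)^{\omega_H(\overrightarrow{j})}$ uniformly for $0\leqslant j\leqslant 2^s-1$. To reach an arbitrary $r$, I would pad $\rho_{PP_0\cdots P_{r-1}}$ with an auxiliary $(2^s-r)$-partite state exactly as in (\ref{thm10:8}), use that the padded marginals carry zero T$q$EoA and that $T_q^a(\gamma_{P|P_0\cdots P_{2^s-1}})=T_q^a(\rho_{P|P_0\cdots P_{r-1}})$, and then drop the vanishing summands.

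The step demanding the most care is keeping the chain of inequalities pointing the correct way. Unlike the monogamy setting, where $\mathcal{K}_\beta\geqslant1$ and Lemma \ref{lem1}(a) produces a lower bound, here the subadditivity $(1+k)^\mu\leqslant 1+k^\mu$ for $\mu\in(0,1]$ forces $\mathcal{K}_\mu\leqslant1$, and it is Lemma \ref{lem1}(c) together with the increasing map $x\mapsto x^\mu$ that must be used so that each estimate weakens rather than strengthens the bound, preserving an \emph{upper} bound at every stage. One must also verify at the outset that the restricted range $q\in[\frac{5-\sqrt{13}}{2},2]\cup[3,\frac{5+\sqrt{13}}{2}]$ is exactly the regime in which the assisted identity (\ref{T7}) holds on each reduced state, so that the T$q$EoA coincides with the T$q$E and the base polygamy relation (\ref{T5}) is legitimately invoked.
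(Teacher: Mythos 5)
Your proposal is correct and follows essentially the same route as the paper's own proof: reduce to the numerical inequality $\bigl(\sum_j T_q^a(\rho_{PP_j})\bigr)^\mu\leqslant\sum_j(\mathcal{K}_\mu)^{\omega_H(\overrightarrow{j})}[T_q^a(\rho_{PP_j})]^\mu$ via (\ref{T5}), induct on $s$ for $r=2^s$ using Lemma \ref{lem1}(c) together with the block-ratio bound $k^{2^{s-1}}\leqslant k\leqslant1$ and the relabeling/Hamming-weight bookkeeping, then pad with an auxiliary state as in (\ref{thm10:8}) for general $r$. Your additions (the explicit monotonicity justification for raising (\ref{T5}) to the power $\mu$, and the degenerate case $T_q^a(\rho_{PP_0})=0$) are details the paper leaves implicit, not a different method.
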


\begin{proof}
	Since $T_q^a(\rho_{P|P_0\cdots P_{r-1}})\leqslant \sum\limits_{j=0}^{r-1} T_q^a(\rho_{PP_j})$ from inequality (\ref{T5}), it is sufficient to prove that
	\begin{equation}\label{thm7:3}
		\left(\sum\limits_{j=0}^{r-1} T_q^a(\rho_{PP_j})\right)^\mu
		\leqslant\sum\limits_{j=0}^{r-1}(\mathcal{K}_\mu)^{\omega_H(\overrightarrow{j})}[T_q^a(\rho_{ PP_j})]^\mu.
	\end{equation}
	
	First, we prove that the inequality (\ref{thm7:3}) holds for the case of $r=2^s$ by using mathematical induction on $s$.
	For $s=1$, using Lemma \ref{lem1} $\mathrm{(c)}$ we have
	\begin{eqnarray}\label{thm7:4}
		(T_q^a(\rho_{PP_0})+T_q^a(\rho_{PP_1}))^\mu
		&=&(T_q^a(\rho_{PP_0}))^\mu \Big(1+\frac{T_q^a(\rho_{PP_1})}{T_q^a(\rho_{PP_0})}\Big)^\mu \nonumber \\
		&\leqslant& (T_q^a(\rho_{PP_0}))^\mu \Bigg[1+\mathcal{K}_\mu\Bigg(\frac{T_q^a(\rho_{PP_1})}{T_q^a(\rho_{PP_0})}\Bigg)^\mu\Bigg] \nonumber \\
		&=&(T_q^a(\rho_{PP_0}))^\mu+\mathcal{K}_\mu(T_q^a(\rho_{PP_1}))^\mu.
	\end{eqnarray}
	Assume that the inequality (\ref{thm7:3}) holds for $r=2^{s-1}$ with $s\geqslant 2$. Consider the case of $r=2^s$.
	From  (\ref{thm7:1}) we have $T_q^a(\rho_{PP_{j+2^{s-1}}})\leqslant k^{2^{s-1}}T_q^a(\rho_{PP_j})$ for $j=0,1,\cdots, 2^{s-1}-1$.
	Therefore,
	\begin{equation*}
		0\leqslant\frac{\sum\nolimits_{j=2^{s-1}}^{2^s-1}T_q^a(\rho_{PP_j})}{\sum\nolimits_{j=0}^{2^{s-1}-1}
			T_q^a(\rho_{PP_j})}\leqslant k^{2^{s-1}}\leqslant k\leqslant1.
	\end{equation*}
	
	Again using Lemma \ref{lem1} $\mathrm{(c)}$, we have
	\begin{eqnarray}\label{thm7:5}
		\Bigg(\sum\limits_{j=0}^{2^s-1}T_q^a(\rho_{PP_j})\Bigg)^\mu
		&=&\Bigg(\sum\limits_{j=0}^{2^{s-1}-1}T_q^a(\rho_{PP_j})\Bigg)^\mu
		\Bigg(1+\frac{\sum_{j=2^{s-1}}^{2^s-1}T_q^a(\rho_{PP_j})}{\sum_{j=0}^{2^{s-1}-1}T_q^a
			(\rho_{PP_j})}\Bigg)^\mu \nonumber\\
		&\leqslant& \Bigg(\sum\limits_{j=0}^{2^{s-1}-1}T_q^a(\rho_{PP_j})\Bigg)^\mu
		\Bigg[1+\mathcal{K}_\mu\Bigg(\frac{\sum_{j=2^{s-1}}^{2^s-1}T_q^a(\rho_{PP_j})}{\sum_{j=0}^{2^{s-1}-1}T_q^a
			(\rho_{PP_j})}\Bigg)^\mu\Bigg] \nonumber\\
		&=&\Bigg(\sum\limits_{j=0}^{2^{s-1}-1}T_q^a(\rho_{PP_j})\Bigg)^\mu+\mathcal{K}_\mu\Bigg(\sum\limits_{j=2^{s-1}}^{2^s-1}T_q^a(\rho_{PP_j})\Bigg)
		^\mu.
	\end{eqnarray}
	From the induction hypothesis, we have
	\begin{equation}\label{thm7:6}
		\Bigg(\sum\limits_{j=0}^{2^{s-1}-1}T_q^a(\rho_{PP_j})\Bigg)^\mu \leqslant
		\sum\limits_{j=0}^{2^{s-1}-1}(\mathcal{K}_\mu)^{\omega_H(\overrightarrow{j})}
		[T_q^a(\rho_{PP_j})]^\mu.
	\end{equation}
	By relabeling the subsystems, we  get
	\begin{equation}\label{thm7:7}
		\Bigg(\sum\limits_{j=2^{s-1}}^{2^s-1}T_q^a(\rho_{PP_j})\Bigg)^\mu \leqslant
		\sum\limits_{j=2^{s-1}}^{2^s-1}(\mathcal{K}_\mu)^{\omega_H(\overrightarrow{j})-1}[T_q^a(\rho_{PP_j})]^\mu.
	\end{equation}
	Hence we have
	\begin{equation}
		\Bigg(\sum\limits_{j=0}^{2^s-1}T_q^a(\rho_{PP_j})\Bigg)^\mu\leqslant
		\sum\limits_{j=0}^{2^s-1}(\mathcal{K}_\mu)^{\omega_H(\overrightarrow{j})}[T_q^a(\rho_{PP_j})]^\mu.
	\end{equation}
	Now we extend the above conclusion to arbitrary integer $r$. Note that there always exists some $s$ such that $0<r\leqslant 2^s$. Let us consider a $(2^s+1)$-partite quantum state
	\begin{equation}\label{thm7:8}
		\gamma_{PP_0P_1\cdots P_{2^s-1}}=\rho_{PP_0P_1\cdots P_{r-1}}\otimes \sigma_{P_r\cdots P_{2^s-1}},
	\end{equation}
	which is the tensor product of $\rho_{PP_0P_1\cdots P_{r-1}}$ and an arbitrary $(2^s-r)$-partite state $\sigma_{P_r\cdots P_{2^s-1}}$.
	
	Similar to the proof above, we have
	\begin{equation}
		[T_q^a(\gamma_{P|P_0P_1\cdots P_{2^s-1}})]^\mu
		\leqslant\sum\limits_{j=0}^{2^s-1}(\mathcal{K}_\mu)^{\omega_H(\overrightarrow{j})}[T_q^a(\gamma_{PP_j})]^\mu,
	\end{equation}
	where $\gamma_{PP_j}$ is the reduced density matrix of $\gamma_{PP_0P_1\cdots P_{2^s-1}}$, $j=0,1,\ldots,2^s-1$.
	
	Taking into account the following obvious facts: $T_q^a\left(\gamma_{P|P_0 P_1 \cdots P_{2^s-1}}\right)=T_q^a\left(\rho_{P|P_0 P_1 \cdots P_{r-1}}\right)$,
	$T_q^a\left(\gamma_{PP_j}\right)=0$ for $j=r, \cdots , 2^s-1$,
	and $T_q^a(\gamma_{PP_j})=T_q^a(\rho_{PP_j})$ for each $j=0, \cdots , r-1$, we get
	\begin{eqnarray}
		[T_q^a(\rho_{P|P_0P_1\cdots P_{r-1}})]^\mu &=&[T_q^a(\gamma_{P|P_0P_1\cdots P_{2^s-1}})]^\mu \nonumber\\
		&\leqslant& \sum\limits_{j=0}^{2^s-1}(\mathcal{K}_\mu)^{\omega_H(\overrightarrow{j})}[T_q^a(\gamma_{PP_j})]^\mu \nonumber\\
		&=& \sum\limits_{j=0}^{r-1}(\mathcal{K}_\mu)^{\omega_H(\overrightarrow{j})}[T_q^a(\rho_{PP_j})]^\mu.
	\end{eqnarray}
	This completes the proof.
\end{proof}

\begin{remark}
	Since $\Big(\frac{(1+k)^\mu-1}{k^\mu}\Big)^{\omega_H(\overrightarrow{j})}\leqslant 1$
	for $\mu\in(0,1]$ and $k\in(0,1]$, our new polygamy relation for T$q$EoA is tighter than inequality (\ref{T5}) in \cite{Shi2020} under certain assumptions for the GWV states . In particular,  when $k=1$, we get a tighter polygamy inequality
	\begin{eqnarray}\label{thm7:10}
		[T_q^a(\rho_{P|P_0\cdots P_{r-1}})]^\mu\leqslant\sum\limits_{j=0}^{r-1}(2^\mu-1)^{\omega_H(\overrightarrow{j})}[T_q^a(\rho_{ PP_j})]^\mu
	\end{eqnarray}
	for any GWV states without the assumptions.
	If one takes $q=2$, our polygamy inequality (\ref{thm7:10}) gives rise to the one in \cite{Shi1}. Furthermore, the inequality (\ref{thm7:2}) gets tighter as $k$ decreases.
\end{remark}

\begin{example}\label{exm2}
	Let us consider the 4-qubit  GW state,
	\begin{eqnarray}
		&\ket{\psi}_{A_1A_2A_3A_4}
		=0.3\ket{0001}+0.4\ket{0010}+{0.5}\ket{0100}+\sqrt{0.5}\ket{1000}.
	\end{eqnarray}
	We have $\rho_{A_1A_2A_3}=0.09\ket{000}\bra{000}+\ket{\phi}\bra{\phi}$ with $\ket{\phi}=0.4\ket{001}+0.5\ket{010}+\sqrt{0.5}\ket{100}$,
	$C(\rho_{A_1A_2})=\frac{\sqrt{2}}{2}$ and $C(\rho_{A_1A_3})=\frac{2\sqrt{2}}{5}$.
	Set $q=2$. We obtain
	$$
	T^a_{2}(\rho_{A_1A_2})=g_2\left(C^2(\rho_{A_1A_2})\right)=\frac{1}{4},~~ T^a_{2}(\rho_{A_1A_3})=g_2\left(C^2(\rho_{A_1A_3})\right)=\frac{4}{25}.
	$$
	Therefore, $[T^a_{2}(\rho_{A_1|A_2A_3})]^\mu\leqslant\left(\frac{1}{4}\right)^\mu+\frac{(1+k)^\mu-1}{k^\mu}\left(\frac{4}{25}\right)^\mu$ from (\ref{thm7:2}), $T^a_{2}(\rho_{A_1|A_2A_3})]^\mu\leqslant\left(\frac{1}{4}\right)^\mu+(2^\mu-1)\left(\frac{4}{25}\right)^\mu$ from (\ref{thm7:10}) and  $[T^a_{2}(\rho_{A_1|A_2A_3})]^\mu\leqslant \left(\frac{1}{4}\right)^\mu+\left(\frac{4}{25}\right)^\mu$ from (\ref{T5}), where $k\in [0.64,1]$ from the condition (\ref{thm7:1}). One can see that our result is better than the ones in \cite{Shi1,Shi2020}, and the smaller the $k$ is, the tighter relation is, see Fig.~\ref{Fig2}.
\end{example}

\begin{figure}[H]
	\centering
	\includegraphics[width=8cm]{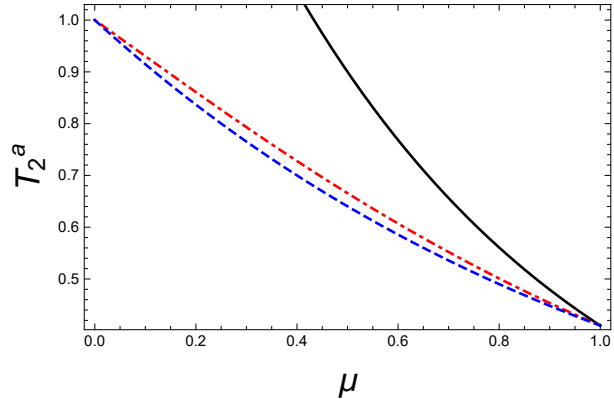}
	\caption{{\small The vertical axis is the upper bound of the Tsallis-$q$ entanglement of assistance $T_2^a(\rho_{A_1|A_2A_3})$. The  blue dashed line represents the upper bound from our result (\ref{thm7:2}) for $k=0.64$, the red dot-dashed line represents the upper bound from the result in \cite{Shi1}, and the black solid line represents the upper bound from the result in \cite{Shi2020}.}}
	\label{Fig2}
\end{figure}

Similar to the improvement from the inequality (\ref{thm10:2}) to the inequality (\ref{thm11:1}), we can analogously improve the polygamy inequality
of Theorem \ref{thm7} under certain conditions.

\begin{theorem}\label{thm8}
	For $q\in[\frac{5-\sqrt{13}}{2},2]\cup [3,\frac{5+\sqrt{13}}{2}]$ we have
	\begin{eqnarray}\label{thm8:1}
		[T_q^a(\rho_{P|P_0\cdots P_{r-1}})]^\mu\leqslant\sum\limits_{j=0}^{r-1}(\mathcal{K}_\mu)^{j}[T_q^a(\rho_{ PP_j})]^\mu
	\end{eqnarray}
	conditioned that
	\begin{equation}\label{thm8:2}
		k T_q^a(\rho_{PP_l})\geqslant \sum\limits_{j=l+1}^{r-1}T_q^a(\rho_{PP_{j}})
	\end{equation}
	for $l=0,1,\cdots, r-2$ and $0<k\leqslant1$, where $\mu\in (0,1]$ and $\mathcal{K}_\mu=\frac{(1+k)^\mu-1}{k^\mu}$.
\end{theorem}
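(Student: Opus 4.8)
The plan is to mirror the argument of Theorem \ref{thm11}, reversing the direction of every inequality and replacing Lemma \ref{lem1}$\mathrm{(a)}$ by Lemma \ref{lem1}$\mathrm{(c)}$. First I would invoke the $\mu=1$ instance of the polygamy bound (\ref{T5}), namely $T_q^a(\rho_{P|P_0\cdots P_{r-1}})\leqslant\sum_{j=0}^{r-1}T_q^a(\rho_{PP_j})$, and raise both sides to the power $\mu\in(0,1]$; since $x\mapsto x^\mu$ is increasing this preserves the inequality and reduces the theorem to the purely numerical estimate
\[
\Big(\sum_{j=0}^{r-1}T_q^a(\rho_{PP_j})\Big)^\mu\leqslant\sum_{j=0}^{r-1}(\mathcal{K}_\mu)^{j}[T_q^a(\rho_{PP_j})]^\mu.
\]

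Next I would prove this last inequality by induction on $r$. The base case $r=2$ is exactly (\ref{thm7:4}). For the inductive step, assuming the claim for all sizes below $r$, I would use the hypothesis (\ref{thm8:2}) with $l=0$, i.e.\ $kT_q^a(\rho_{PP_0})\geqslant\sum_{j=1}^{r-1}T_q^a(\rho_{PP_j})$, which guarantees that the ratio $x:=\big(\sum_{j=1}^{r-1}T_q^a(\rho_{PP_j})\big)/T_q^a(\rho_{PP_0})$ satisfies $0\leqslant x\leqslant k\leqslant1$. Factoring out $[T_q^a(\rho_{PP_0})]^\mu$ and applying Lemma \ref{lem1}$\mathrm{(c)}$ with $t=\mu$ then yields
\[
\Big(\sum_{j=0}^{r-1}T_q^a(\rho_{PP_j})\Big)^\mu\leqslant[T_q^a(\rho_{PP_0})]^\mu+\mathcal{K}_\mu\Big(\sum_{j=1}^{r-1}T_q^a(\rho_{PP_j})\Big)^\mu.
\]

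Finally I would apply the induction hypothesis to the residual sum over $P_1,\dots,P_{r-1}$. The point to verify carefully is that the conditions needed for this step---$kT_q^a(\rho_{PP_l})\geqslant\sum_{j=l+1}^{r-1}T_q^a(\rho_{PP_j})$ for $l=1,\dots,r-2$---are precisely the subset of (\ref{thm8:2}) obtained by discarding $l=0$, so the hypothesis transfers verbatim after relabeling. This gives $\big(\sum_{j=1}^{r-1}T_q^a(\rho_{PP_j})\big)^\mu\leqslant\sum_{j=1}^{r-1}(\mathcal{K}_\mu)^{j-1}[T_q^a(\rho_{PP_j})]^\mu$; substituting and absorbing the extra factor $\mathcal{K}_\mu$ into the exponent (so $(\mathcal{K}_\mu)^{j-1}\mapsto(\mathcal{K}_\mu)^{j}$) closes the induction.

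I do not anticipate a genuine obstacle, since the cumulative condition (\ref{thm8:2}) is exactly what makes the ``peel off $P_0$'' scheme self-similar. The only care required is to confirm that the hypotheses survive the relabeling used in the induction, and that reversing every inequality in the monogamy template of Theorem \ref{thm11} remains consistent with the direction of the bound furnished by Lemma \ref{lem1}$\mathrm{(c)}$ (the subadditive estimate $(1+x)^\mu\leqslant1+\mathcal{K}_\mu x^\mu$ valid on $0\leqslant x\leqslant k\leqslant1$, $0\leqslant\mu\leqslant1$), which is precisely the regime selected by the constraints $0<k\leqslant1$ and $\mu\in(0,1]$.
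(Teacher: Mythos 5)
Your proposal is correct and follows essentially the same route as the paper's own proof: reduce via the $\mu=1$ case of (\ref{T5}) to the purely numerical inequality, induct on $r$ with base case (\ref{thm7:4}), peel off $P_0$ using Lemma \ref{lem1}$\mathrm{(c)}$ under condition (\ref{thm8:2}) with $l=0$, and apply the induction hypothesis to the remaining subsystems. Your explicit check that the conditions (\ref{thm8:2}) for $l=1,\dots,r-2$ transfer verbatim to the relabeled subsystems is a detail the paper leaves implicit, but the argument is the same.
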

\begin{proof}
	From the inequality (\ref{T5}), we only need to prove
	\begin{equation}\label{thm8:3}
		\Big(\sum_{j=0}^{r-1}T_q^a\left(\rho_{PP_j}\right)\Big)^{\mu}
		\leqslant \sum_{j=0}^{r-1} ( \mathcal{K}_\mu)^{j}[T_q^a(\rho_{PP_j})]^\mu.
	\end{equation}
	We use mathematical induction on $r$ here. It is obvious that the inequality (\ref{thm8:3}) holds for $r=2$ from (\ref{thm7:4}). Assume that it also holds for any positive integer less than $r$. Since $0\leqslant {\sum\limits_{j=1}^{r-1}T_q^a(\rho_{PP_{j}})}/{T_q^a(\rho_{PP_0})}\leqslant k$, we have
	\begin{eqnarray}
		\left(\sum_{j=0}^{r-1}T_q^a\left(\rho_{PP_j}\right)\right)^{\mu}
		&=&[T_q^a(\rho_{PP_0})]^{\mu}
		\Bigg(1+\frac{\sum_{j=1}^{r-1}T_q^a(\rho_{PP_j})}
		{T_q^a(\rho_{PP_0})} \Bigg)^{\mu}\nonumber\\
		&\leqslant& [T_q^a(\rho_{PP_0})]^{\mu}\Bigg[1+ \mathcal{K}_\mu\Bigg(\frac{\sum\nolimits_{j=1}^{r-1}T_q^a(\rho_{PP_{j}})}{T_q^a(\rho_{PP_0})}\Bigg)^{\mu}\Bigg]\nonumber\\
		&=&[T_q^a(\rho_{PP_0})]^{\mu}+\mathcal{K}_\mu\Bigg(\sum\limits_{j=1}^{r-1}T_q^a(\rho_{PP_{j}})\Bigg)^{\mu}\nonumber\\
		&\leqslant& [T_q^a(\rho_{PP_0})]^{\mu}+\mathcal{K}_\mu \sum\limits_{j=1}^{r-1}(\mathcal{K}_\mu)^{j-1}[T_q^a(\rho_{ PP_j})]^\mu \nonumber\\
		&=&\sum\limits_{j=0}^{r-1}(\mathcal{K}_\mu)^{j}[T_q^a(\rho_{ PP_j})]^\mu,
	\end{eqnarray}
	where the first inequality is due to Lemma \ref{lem1} $\mathrm{(c)}$ and the second inequality is due to the induction hypothesis.
\end{proof}

Since
$[T_q^a(\rho_{P|P_0\cdots P_{r-1}})]^\mu\leqslant\sum\limits_{j=0}^{r-1}(\mathcal{K}_\mu)^{j}[T_q^a(\rho_{ PP_j})]^\mu\leqslant\sum\limits_{j=0}^{r-1}(\mathcal{K}_\mu)^{\omega_H(\overrightarrow{j})}[T_q^a(\rho_{ PP_j})]^\mu$
for $\mu\in(0,1]$,  the inequality (\ref{thm8:1}) of Theorem \ref{thm8} is tighter than the inequality (\ref{thm7:2}) of Theorem \ref{thm7} under the conditions.
Similarly, we provide a more general result by changing the conditions of the Theorem \ref{thm8}.

\begin{theorem}\label{thm9:1}
	For $q\in[\frac{5-\sqrt{13}}{2},2]\cup [3,\frac{5+\sqrt{13}}{2}]$ we have
	\begin{eqnarray}\label{thm9:2}
		[T_q^a(\rho_{P|P_0\cdots P_{r-1}})]^\mu&\leqslant& \sum_{j=0}^{t}(\mathcal{K}_\mu)^{j}[T_q^a(\rho_{PP_j})]^\mu+(\mathcal{K}_\mu )^{t+2}\sum_{j=t+1}^{r-2}[T_q^a(\rho_{PP_j})]^\mu\nonumber\\
		&&\ \ \ +(\mathcal{K}_\mu)^{t+1}[T_q^a(\rho_{PP_{r-1}})]^\mu
	\end{eqnarray}
	conditioned that
	$k T_q^a(\rho_{PP_i})\geqslant T_q^a(\rho_{P|P_{i+1}\cdots P_{r-1}})$ for $i=0,1,\cdots, t$ and $T_q^a(\rho_{PP_j})\leqslant k T_q^a(\rho_{P|P_{j+1}\cdots P_{r-1}})$ for $j=t+1,\cdots, r-2$, $ \forall  0<k\leqslant1$, $0\leqslant t\leqslant r-3$, $r\geqslant3$, where $\mu\in (0,1]$ and $\mathcal{K}_\mu=\frac{(1+k)^\mu-1}{k^\mu}$.
\end{theorem}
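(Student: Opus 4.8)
The plan is to follow the proof of Theorem~\ref{thm14} essentially verbatim, but in the polygamy (upper-bound) direction and with $T_q$ replaced by the T$q$EoA $T_q^a$. The role of the monogamy building block ``Theorem~\ref{thm10} for $r=2$'' is now played by the $r=2$ estimate (\ref{thm7:4}) inside Theorem~\ref{thm7}: combining the linear polygamy relation (\ref{T5}) (taken with exponent $1$) for the two-part cut $\{P_i\}\,|\,\{P_{i+1}\cdots P_{r-1}\}$ with Lemma~\ref{lem1}~$\mathrm{(c)}$ gives
\begin{equation*}
	[T_q^a(\rho_{P|P_i\cdots P_{r-1}})]^\mu \leqslant [T_q^a(\rho_{PP_i})]^\mu + \mathcal{K}_\mu\,[T_q^a(\rho_{P|P_{i+1}\cdots P_{r-1}})]^\mu
\end{equation*}
whenever $T_q^a(\rho_{P|P_{i+1}\cdots P_{r-1}}) \leqslant k\,T_q^a(\rho_{PP_i})$, so that the ratio fed into Lemma~\ref{lem1}~$\mathrm{(c)}$ stays in $[0,k]\subseteq[0,1]$. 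This application is legitimate because each conditional state $\rho_{P|P_{i+1}\cdots P_{r-1}}$ is a reduced state of the GWV state under a coarser partition and is therefore itself a GWV state, so (\ref{T5}) and the analytic formula (\ref{T7}) apply to it.

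First I would run the front phase. For $i=0,1,\ldots,t$ the hypothesis $k\,T_q^a(\rho_{PP_i})\geqslant T_q^a(\rho_{P|P_{i+1}\cdots P_{r-1}})$ holds, so applying the displayed building block $t+1$ times yields the polygamy analogue of (\ref{thm14:2}),
\begin{equation*}
	[T_q^a(\rho_{P|P_0\cdots P_{r-1}})]^\mu \leqslant \sum_{j=0}^{t}(\mathcal{K}_\mu)^{j}[T_q^a(\rho_{PP_j})]^\mu + (\mathcal{K}_\mu)^{t+1}[T_q^a(\rho_{P|P_{t+1}\cdots P_{r-1}})]^\mu .
\end{equation*}
Next I would run the back phase with the two blocks exchanged. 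For $j=t+1,\ldots,r-2$ the reversed hypothesis $T_q^a(\rho_{PP_j})\leqslant k\,T_q^a(\rho_{P|P_{j+1}\cdots P_{r-1}})$ now places the composite block in the coefficient-free slot, giving
\begin{equation*}
	[T_q^a(\rho_{P|P_j\cdots P_{r-1}})]^\mu \leqslant \mathcal{K}_\mu\,[T_q^a(\rho_{PP_j})]^\mu + [T_q^a(\rho_{P|P_{j+1}\cdots P_{r-1}})]^\mu .
\end{equation*}
Chaining this from $j=t+1$ through $j=r-2$ produces the analogue of (\ref{thm14:3}), namely
\begin{equation*}
	[T_q^a(\rho_{P|P_{t+1}\cdots P_{r-1}})]^\mu \leqslant \mathcal{K}_\mu\sum_{j=t+1}^{r-2}[T_q^a(\rho_{PP_j})]^\mu + [T_q^a(\rho_{PP_{r-1}})]^\mu .
\end{equation*}
Substituting this into the front-phase bound and distributing the factor $(\mathcal{K}_\mu)^{t+1}$ turns the middle-block exponent into $t+2$ and leaves $t+1$ on the last term, which is exactly (\ref{thm9:2}).

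The main obstacle is not a hard inequality but rather twofold bookkeeping. The first point is justifying that the $r=2$ building block may be applied to a bipartition, not merely to two single subsystems; this rests entirely on the structural fact recalled at the start of the section, that partitions and reduced states of a GWV state remain GWV states, so that (\ref{T5}) is available at every intermediate cut. The second point is the exponent arithmetic: the front phase supplies the increasing powers $(\mathcal{K}_\mu)^{j}$ for $j\leqslant t$, while the exchanged back phase contributes one extra uniform factor $\mathcal{K}_\mu$, and one must verify that these combine into the asymmetric exponents $t+2$ and $t+1$ appearing in (\ref{thm9:2}). Throughout, the assumptions $\mu\in(0,1]$ and $0<k\leqslant1$ guarantee that each invocation of Lemma~\ref{lem1}~$\mathrm{(c)}$ lies within its admissible range $0\leqslant x\leqslant k\leqslant1$, $0\leqslant\mu\leqslant1$.
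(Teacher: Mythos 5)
Your proposal is correct and is exactly the argument the paper intends: the paper's own ``proof'' is only the one-line remark that Theorem~\ref{thm9:1} follows as in Theorem~\ref{thm14} but using the $r=2$ case of (\ref{thm7:2}) together with Lemma~\ref{lem1}~$\mathrm{(c)}$, and your front-phase/back-phase chaining with the exchanged-slot building block and the $(\mathcal{K}_\mu)^{t+2}$ versus $(\mathcal{K}_\mu)^{t+1}$ bookkeeping is precisely that argument written out in full.
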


The proof is similar to the one of Theorem \ref{thm14}, by using the inequality (\ref{thm7:2}) for the case $r=2$ and Lemma \ref{lem1} $\mathrm{(c)}$.

\begin{remark}
	Note that if $k T_q^a(\rho_{PP_j})\geqslant T_q^a(\rho_{P|P_{j+1}\cdots P_{r-1}})$ for all $j=0,1,\cdots, r-2$,  one has
	\begin{eqnarray}\label{thm9:6}
		[T_q^a(\rho_{P|P_0\cdots P_{r-1}})]^\mu\leqslant \sum_{j=0}^{r-1}(\mathcal{K}_\mu)^{j}[T_q^a(\rho_{PP_j})]^\mu.
	\end{eqnarray}
	Due to that $T_q(\rho_{P|P_0\cdots P_{r-1}})=T_q^a(\rho_{P|P_0\cdots P_{r-1}})$ for $q\in[\frac{5-\sqrt{13}}{2}, 2]\cup[3, \frac{5+\sqrt{13}}{2}]$, the above inequalities (\ref{thm7:2}), (\ref{thm8:1}) and (\ref{thm9:2}) also give the upper bounds of $T_q(\rho_{P|P_0\cdots P_{r-1}})$ for GWV states $\ket{\psi}_{A_1\cdots A_n}$.
\end{remark}

\section{Tighter monogamy and polygamy relations based on R$\alpha$E and R$\alpha$EoA for GWV states}\label{sec4}

For a bipartite pure state $|\psi\rangle_{AB}$, the R\'{e}nyi-$\alpha$ entanglement (R$\alpha$E) is defined as \cite{KimR2010} $E_\alpha(|\psi\rangle_{AB})=S_\alpha(\rho_A)$, where $S_\alpha(\rho)= \frac{1}{1-\alpha}\log (\mbox{tr} \rho^\alpha)$ with $\alpha >0$, $\alpha \neq 1$. The $S_{\alpha}(\rho)$ converges to the von Neumann entropy when $\alpha$ tends to 1.
For a bipartite mixed state $\rho_{AB}$, the R\'{e}nyi-$\alpha$ entanglement is given by
\begin{equation*}
	E_{\alpha}\left(\rho_{AB} \right)=\min\limits_{\{p_i, |\psi_i\rangle\}} \sum_i p_i E_{\alpha}(|\psi_i \rangle_{AB}),
\end{equation*}
where the minimum is taken over all possible pure state decompositions of
$\rho_{AB}$.
As a dual concept to R$\alpha$E, the
R\'{e}nyi-$\alpha$ entanglement of assistance (R$\alpha$EoA) is given by
\begin{equation}
	E^{a}_{\alpha}\left(\rho_{AB} \right)=\max \limits_{\{p_i, |\psi_i\rangle\}}\sum_i p_i E_{\alpha}(|\psi_i \rangle_{AB}),
	\label{EoA}
\end{equation}
where the maximum is taken over all possible pure state decompositions of $\rho_{AB}$.

In \cite{KimR2010} the authors have derived an analytical relation between the R\'{e}nyi-$\alpha$ entanglement and concurrence for any two-qubit mixed state $\rho_{AB}$,
\begin{eqnarray}
	E_\alpha  \left( {\rho_{AB} } \right) = f_\alpha  \left[ {C^2 \left( \rho_{AB} \right)} \right],
	\label{q6}
\end{eqnarray}
where $\alpha\in [1,\infty)$ and $f_\alpha \left( x \right)$ has the form
\begin{equation}
	f_\alpha \! \left( x \right)\!= \!\frac{1}{{1 - \alpha }}\!\log _2 \!\left[ {\left( {\frac{{1 \!-\!
					\sqrt {1 - x} }}{2}} \right)^\alpha  \!\!\!\! +\! \left( {\frac{{1 \!+\! \sqrt {1 - x} }}{2}}
		\right)^\alpha  } \right].
	\label{q7}
\end{equation}
Set $x=y^2$ and denote $\tilde{f}_\alpha(y)=f_\alpha(y^2)$. Then the function $\tilde{f}_\alpha(y)$ is monotonically increasing and convex for $y\in[0,1]$. Later, Wang \emph{et al}. \cite{Wang2016} showed that (\ref{q6}) also holds for $\alpha\in[\frac{\sqrt7-1}{2},\infty)$.

Quite recently, Liang {\it et al.} \cite{Liang2020} provided the following analytic formulas for R$\alpha$E and R$\alpha$EoA,
\begin{eqnarray}\label{E1}
	E_\alpha(\rho_{A_{i_1}|A_{i_2}\cdots A_{i_m}})=f_\alpha(C^2(\rho_{A_{i_1}|A_{i_2}\cdots A_{i_m}}))
\end{eqnarray}
for $\alpha\in[\frac{\sqrt 7  - 1}{2}, \infty)$, and
\begin{eqnarray}\label{E2}
	E_\alpha^a(\rho_{A_{i_1}|A_{i_2}\cdots A_{i_m}})=E_\alpha(\rho_{A_{i_1}|A_{i_2}\cdots A_{i_m}})=f_\alpha(C^2(\rho_{A_{i_1}|A_{i_2}\cdots A_{i_m}}))
\end{eqnarray}
for $\alpha\in [\frac{\sqrt7-1}{2}, \frac{\sqrt{13}-1}{2}]$, together with
the following monogamy relation based on R$\alpha$E for GWV states,
\begin{eqnarray}\label{E3}
	E_\alpha^\mu(\rho_{P_1|P_2\cdots P_r})\geqslant \sum_{j=2}^{r} E_\alpha^\mu(\rho_{P_1P_j})
\end{eqnarray}
with $\mu\in[2,\infty)$ and $\alpha\in [\frac{\sqrt7-1}{2}, \infty)$, as well as
the following polygamy inequalities based on R$\alpha$EoA,
\begin{eqnarray}\label{E5}
	(E_\alpha^a(\rho_{P_1|P_2\cdots P_r}))^\mu\leqslant \sum_{j=2}^{r} (E_\alpha^a(\rho_{P_1P_j}))^\mu
\end{eqnarray}
with $\mu\in (0,1]$ and $\alpha\in [\frac{\sqrt7-1}{2}, \frac{\sqrt{13}-1}{2}]$.

Instead of the T$q$E and T$q$EoA used in Theorems of section \ref{sec3}, next we consider the R$\alpha$E and R$\alpha$EoA. The proofs of the Theorems given below are similar to the cases for T$q$E and T$q$EoA.

\subsection{Tighter monogamy  relations in  terms of  R$\alpha$E}\label{sec4-1}
With a similar approach to T$q$E, we first present the following tighter weighted monogamy relations based on R$\alpha$E for GWV states.

\begin{theorem}\label{thm12}
	If the subsystems $P_0,P_1,\cdots,P_{r-1}$ satisfy
	\begin{equation}\label{thm12:1}
		k E_\alpha^2(\rho_{PP_j})\geqslant E_\alpha^2(\rho_{PP_{j+1}})\geqslant 0
	\end{equation}
	for $j=0,1,\cdots, r-2$ and $0<k\leqslant 1$, we have
	\begin{eqnarray}\label{thm12:2}
		E_\alpha^\beta(\rho_{P|P_0\cdots P_{r-1}})\geqslant\sum\limits_{j=0}^{r-1}(\mathcal{K}_\beta)^{\omega_H(\overrightarrow{j})}E_\alpha^\beta(\rho_{ PP_j}),
	\end{eqnarray}
	for $\alpha\in [\frac{\sqrt7-1}{2}, \infty)$, where $\beta\in[2,\infty)$ and $\mathcal{K}_\beta=\frac{(1+k)^\frac{\beta}{2}-1}{k^\frac{\beta}{2}}$.
\end{theorem}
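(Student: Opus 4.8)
The plan is to follow the proof of Theorem~\ref{thm10} essentially line for line, replacing the Tsallis-$q$ entanglement by the R\'enyi-$\alpha$ entanglement and invoking the R$\alpha$E monogamy relation (\ref{E3}) in place of (\ref{T3}). First I would apply (\ref{E3}) with $\mu=2$, which is legitimate because $\alpha\in[\frac{\sqrt7-1}{2},\infty)$ is exactly the range in which (\ref{E3}) was established, to get
\begin{equation*}
	E_\alpha^2(\rho_{P|P_0\cdots P_{r-1}})\geqslant\sum_{j=0}^{r-1}E_\alpha^2(\rho_{PP_j}).
\end{equation*}
Since $\beta/2\geqslant1$, the map $x\mapsto x^{\beta/2}$ is increasing on $[0,\infty)$, so raising both sides to the power $\beta/2$ preserves the inequality and it suffices to prove the purely numerical estimate
\begin{equation*}
	\Bigl(\sum_{j=0}^{r-1}E_\alpha^2(\rho_{PP_j})\Bigr)^{\frac{\beta}{2}}\geqslant\sum_{j=0}^{r-1}(\mathcal{K}_\beta)^{\omega_H(\overrightarrow{j})}E_\alpha^\beta(\rho_{PP_j}),
\end{equation*}
which no longer refers to the quantum states themselves.

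Next I would establish this numerical estimate first for $r=2^s$ by induction on $s$, mirroring (\ref{thm10:4})--(\ref{thm10:7}). The base case $s=1$ is a single application of Lemma~\ref{lem1}(a) with $t=\beta/2$ and $x=E_\alpha^2(\rho_{PP_1})/E_\alpha^2(\rho_{PP_0})$, where $0\leqslant x\leqslant k\leqslant1$ holds by the hypothesis (\ref{thm12:1}). For the inductive step, iterating (\ref{thm12:1}) gives $E_\alpha^2(\rho_{PP_{j+2^{s-1}}})\leqslant k^{2^{s-1}}E_\alpha^2(\rho_{PP_j})$, whence the ratio of the upper-half sum $\sum_{j=2^{s-1}}^{2^s-1}$ to the lower-half sum $\sum_{j=0}^{2^{s-1}-1}$ is bounded by $k^{2^{s-1}}\leqslant k\leqslant1$. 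One use of Lemma~\ref{lem1}(a) then splits the $\beta/2$-power of the full sum into the $\beta/2$-powers of the two half-sums, with the factor $\mathcal{K}_\beta$ attached to the upper half, after which the induction hypothesis is applied to each half separately.

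The one step that genuinely uses the combinatorics, and the only point I expect to require care, is the Hamming-weight bookkeeping for the upper half. For $2^{s-1}\leqslant j\leqslant2^s-1$ the top bit of $\overrightarrow{j}$ is set, so $\omega_H(\overrightarrow{j})=\omega_H(\overrightarrow{j-2^{s-1}})+1$; after relabeling the upper-half subsystems and applying the induction hypothesis one obtains the exponent $\omega_H(\overrightarrow{j})-1$, and the extra factor $\mathcal{K}_\beta$ produced by the splitting step upgrades it to the required $\omega_H(\overrightarrow{j})$. Finally I would extend from $r=2^s$ to an arbitrary $r$ by the padding device of (\ref{thm10:8}): form $\gamma_{PP_0\cdots P_{2^s-1}}=\rho_{PP_0\cdots P_{r-1}}\otimes\sigma_{P_r\cdots P_{2^s-1}}$, apply the $r=2^s$ case, and use $E_\alpha(\gamma_{P|P_0\cdots P_{2^s-1}})=E_\alpha(\rho_{P|P_0\cdots P_{r-1}})$ together with $E_\alpha(\gamma_{PP_j})=E_\alpha(\rho_{PP_j})$ for $j\leqslant r-1$ and $E_\alpha(\gamma_{PP_j})=0$ for $j\geqslant r$ to read off (\ref{thm12:2}). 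Since the entire argument rests only on the monogamy bound (\ref{E3}) and on Lemma~\ref{lem1}(a), both of which hold on the stated range of $\alpha$, no new analytic input beyond the T$q$E case is needed.
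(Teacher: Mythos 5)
Your proposal is correct and coincides with the paper's intended argument: the paper gives no separate proof of Theorem~\ref{thm12}, stating only that the proofs in Section~\ref{sec4} are similar to those for T$q$E, and your adaptation---replacing (\ref{T3}) by (\ref{E3}) with $\mu=2$, then reusing Lemma~\ref{lem1}(a), the Hamming-weight induction for $r=2^s$, and the tensor-product padding of (\ref{thm10:8})---is exactly the proof of Theorem~\ref{thm10} transplanted to R$\alpha$E. Nothing in that argument depends on the particular entanglement measure beyond the monogamy inequality and the ordering condition (\ref{thm12:1}), so the transcription goes through verbatim.
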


Since $\Big(\mathcal{K}_\beta\Big)^{\omega_H(\overrightarrow{j})}\geqslant 1$ for $\beta\geqslant2$ and $0<k\leqslant 1$, our new polygamy relation for R$\alpha$E is tighter than the inequality (\ref{E3}) in \cite{Liang2020} under certain conditions for the GWV states. Moreover, one can find that the inequality (\ref{thm12:2}) gets tighter as $k$ decreases.

\begin{example}\label{exm3}
	Consider the 3-qubit GW state
	\begin{eqnarray}
		&\ket{\psi}_{A_1A_2A_3}
		=\frac{1}{\sqrt{6}}\ket{100}+\frac{2}{\sqrt{6}}\ket{010}+\frac{1}{\sqrt{6}}\ket{001}.
	\end{eqnarray}
	We have $$C(\ket{\psi}_{A_1|A_2A_3})=\frac{\sqrt{5}}{3}, C(\rho_{A_1A_2})=\frac{2}{3}, C(\rho_{A_1A_3})=\frac{1}{3}.$$
	Choosing $\alpha=2$, from (\ref{E1}) one has
	$$E_2(\ket{\psi}_{A_1|A_2A_3})=\mathrm{log}_2\left(\frac{18}{13}\right),~~ E_2(\rho_{A_1A_2})=\mathrm{log}_2\left(\frac{9}{7}\right), ~~ E_2(\rho_{A_1A_3})=\mathrm{log}_2\left(\frac{18}{17}\right).$$
	Then  $E_2^\beta(\ket{\psi}_{A_1|A_2A_3})\geqslant\Big[\mathrm{log}_2\Big(\frac{9}{7}\Big)\Big]^\beta+\frac{(1+k)^\frac{\beta}{2}-1}
	{k^{\frac{\beta}{2}}}\Big[\mathrm{log}_2\Big(\frac{18}{17}\Big)\Big]^\beta$ from our result (\ref{thm12:2}),  and $E_2^\beta(\ket{\psi}_{A_1|A_2A_3})\geqslant\Big[\mathrm{log}_2\Big(\frac{9}{7}\Big)\Big]^\beta+\Big[\mathrm{log}_2\Big(\frac{18}{17}\Big)\Big]^\beta$ from the result (\ref{E3}), where $0.52\leqslant k\leqslant1$. One can see that our result is better than the result (\ref{E3})  in \cite{Liang2020} for $\beta\geqslant2$, see Fig.~\ref{Fig3}.
\end{example}

\begin{figure}[H]
	\centering
	\includegraphics[width=8cm]{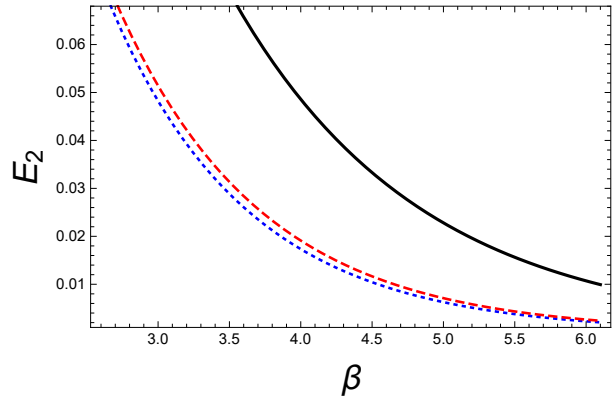}
	\caption{\small The vertical axis is the lower bound of the R\'{e}nyi-$\alpha$ entanglement  $E_2(\ket{\psi}_{A_1|A_2A_3})$. The solid black line represents the exact values of  $E_2(\ket{\psi}_{A_1|A_2A_3})$, the  red dot-dashed line represents the lower bound from our results for $k=0.52$, and the dashed blue line represents the lower bound from the result (\ref{E3}) in \cite{Liang2020}.}
	\label{Fig3}
\end{figure}

The inequality (\ref{thm12:2}) can be further improved under certain conditions.

\begin{theorem}\label{thm13}
	When $\alpha\in [\frac{\sqrt7-1}{2}, \infty)$, we have
	\begin{eqnarray}\label{thm13:1}
		E_\alpha^\beta(\rho_{P|P_0\cdots P_{r-1}})\geqslant\sum\limits_{j=0}^{r-1}(\mathcal{K}_\beta)^{j}E_\alpha^\beta(\rho_{ PP_j})
	\end{eqnarray}
	conditioned that
	\begin{equation}\label{thm13:2}
		k E_\alpha^2(\rho_{PP_l})\geqslant \sum\limits_{j=l+1}^{r-1}E_\alpha^2(\rho_{PP_{j}})
	\end{equation}
	for $l=0,1,\cdots, r-2$ and $0<k\leqslant 1$, where  $\beta\in[2,\infty)$ and $\mathcal{K}_\beta=\frac{(1+k)^\frac{\beta}{2}-1}{k^\frac{\beta}{2}}$.
\end{theorem}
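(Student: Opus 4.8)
The plan is to follow the proof of Theorem \ref{thm11} verbatim, with the Tsallis-$q$ entanglement replaced throughout by the R\'enyi-$\alpha$ entanglement. The starting point is the R$\alpha$E monogamy relation (\ref{E3}) specialized to $\mu=2$, which for $\alpha\in[\frac{\sqrt7-1}{2},\infty)$ gives $E_\alpha^2(\rho_{P|P_0\cdots P_{r-1}})\geqslant \sum_{j=0}^{r-1}E_\alpha^2(\rho_{PP_j})$. Since $\beta/2\geqslant1$, raising both sides to the power $\beta/2$ is order-preserving, so it suffices to establish the purely numerical inequality
\begin{equation*}
\Big(\sum_{j=0}^{r-1}E_\alpha^2(\rho_{PP_j})\Big)^{\frac{\beta}{2}}\geqslant \sum_{j=0}^{r-1}(\mathcal{K}_\beta)^{j}E_\alpha^\beta(\rho_{PP_j}),
\end{equation*}
exactly the R$\alpha$E counterpart of (\ref{thm11:3}).

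I would prove this by induction on $r$. For the base case $r=2$, condition (\ref{thm13:2}) at $l=0$ reads $E_\alpha^2(\rho_{PP_1})\leqslant k E_\alpha^2(\rho_{PP_0})$, so setting $x=E_\alpha^2(\rho_{PP_1})/E_\alpha^2(\rho_{PP_0})\in[0,k]$ and $t=\beta/2\geqslant1$ in Lemma \ref{lem1}$\mathrm{(a)}$ yields the $r=2$ instance, identical to the R$\alpha$E analog of (\ref{thm10:4}) underlying Theorem \ref{thm12}. For the inductive step, I assume the claim for all positive integers less than $r$. Condition (\ref{thm13:2}) at $l=0$ gives $\sum_{j=1}^{r-1}E_\alpha^2(\rho_{PP_j})/E_\alpha^2(\rho_{PP_0})\leqslant k$. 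I would factor out $E_\alpha^\beta(\rho_{PP_0})$, apply Lemma \ref{lem1}$\mathrm{(a)}$ with this ratio as $x$, and then invoke the induction hypothesis on $\big(\sum_{j=1}^{r-1}E_\alpha^2(\rho_{PP_j})\big)^{\beta/2}$. The factor $\mathcal{K}_\beta$ produced by the lemma combines with the weights $(\mathcal{K}_\beta)^{j-1}$ supplied by the induction hypothesis to give $(\mathcal{K}_\beta)^{j}$, recovering the desired bound.

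The argument carries no genuine analytic difficulty, being structurally identical to Theorem \ref{thm11}; the two points requiring care are both bookkeeping. First, I must confirm that (\ref{E3}) indeed holds at $\mu=2$ over the full range $\alpha\in[\frac{\sqrt7-1}{2},\infty)$, so that the reduction to the numerical inequality is legitimate. Second, before applying the induction hypothesis to the reduced family $\{\rho_{PP_1},\ldots,\rho_{PP_{r-1}}\}$, I must check that the conditions (\ref{thm13:2}) for $l=1,\ldots,r-2$ are precisely the hypotheses of the theorem for that $(r-1)$-term system after relabeling; this is immediate, since discarding the $l=0$ constraint leaves exactly the constraints for the shifted indices.
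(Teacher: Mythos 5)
Your proposal is correct and is exactly what the paper intends: for Theorem \ref{thm13} the paper gives no separate argument but states that the proofs in Section \ref{sec4} mirror the T$q$E case, i.e.\ the proof of Theorem \ref{thm11}, which you reproduce faithfully (base case via Lemma \ref{lem1}$\mathrm{(a)}$, induction on $r$, starting from (\ref{E3}) at $\mu=2$). Your two bookkeeping checks — that (\ref{E3}) covers $\mu=2$ for all $\alpha\in[\frac{\sqrt7-1}{2},\infty)$ and that the conditions (\ref{thm13:2}) restrict correctly to the reduced family — are both valid and complete the argument.
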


In general, we have the following monogamy inequality.

\begin{theorem}\label{thm15}
	When $\alpha\in [\frac{\sqrt7-1}{2}, \infty)$, we have
	\begin{eqnarray}\label{thm15:1}
		E_\alpha^\beta(\rho_{P|P_0\cdots P_{r-1}})&\geqslant& \sum_{j=0}^{t}(\mathcal{K}_\beta)^{j}E_\alpha^\beta(\rho_{PP_j})+(\mathcal{K}_\beta )^{t+2}\sum_{j=t+1}^{r-2}E_\alpha^\beta(\rho_{PP_j})\nonumber\\
		&&\ \ \ +(\mathcal{K}_\beta)^{t+1}E_\alpha^\beta(\rho_{PP_{r-1}})
	\end{eqnarray}
	conditioned that
	$k E_\alpha^2(\rho_{PP_i})\geqslant E_\alpha^2(\rho_{P|P_{i+1}\cdots P_{r-1}})$ for $i=0,1,\cdots, t$ and $E_\alpha^2(\rho_{PP_j})\leqslant k E_\alpha^2(\rho_{P|P_{j+1}\cdots P_{r-1}})$ for $j=t+1,\cdots, r-2$, $ \forall  0<k\leqslant1$, $0\leqslant t\leqslant r-3$ and $r\geqslant3$, where $\beta\in[2,\infty)$ and $\mathcal{K}_\beta=\frac{(1+k)^\frac{\beta}{2}-1}{k^\frac{\beta}{2}}$.
\end{theorem}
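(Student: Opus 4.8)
The plan is to mirror the proof of Theorem~\ref{thm14}, replacing the T$q$E monogamy relation (\ref{T3}) by its R$\alpha$E counterpart (\ref{E3}) and reusing Lemma~\ref{lem1}$\mathrm{(a)}$ throughout. The whole argument rests on a single two-party building block: whenever $k E_\alpha^2(\rho_{PX})\geqslant E_\alpha^2(\rho_{PY})$ for a grouping of the tail into two blocks $X,Y$ with $0<k\leqslant 1$, the relation (\ref{E3}) with $\mu=2$ gives $E_\alpha^2(\rho_{P|XY})\geqslant E_\alpha^2(\rho_{PX})+E_\alpha^2(\rho_{PY})$, and then Lemma~\ref{lem1}$\mathrm{(a)}$ (applied with exponent $\beta/2\geqslant 1$ and ratio $x=E_\alpha^2(\rho_{PY})/E_\alpha^2(\rho_{PX})\leqslant k\leqslant 1$) upgrades this to
\begin{equation*}
E_\alpha^\beta(\rho_{P|XY})\geqslant E_\alpha^\beta(\rho_{PX})+\mathcal{K}_\beta E_\alpha^\beta(\rho_{PY}).
\end{equation*}
This is exactly the $r=2$ instance of Theorem~\ref{thm12}, and the essential point is that the weight $\mathcal{K}_\beta$ always attaches to the \emph{smaller} of the two blocks.

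First I would use the first family of hypotheses, $k E_\alpha^2(\rho_{PP_i})\geqslant E_\alpha^2(\rho_{P|P_{i+1}\cdots P_{r-1}})$ for $i=0,1,\dots,t$, to peel off the leading terms one at a time. Applying the building block with $X=P_i$ and $Y=[P_{i+1}\cdots P_{r-1}]$ (the entire remaining tail grouped as a single party, so that $E_\alpha^2(\rho_{P[P_{i+1}\cdots P_{r-1}]})=E_\alpha^2(\rho_{P|P_{i+1}\cdots P_{r-1}})$ is the smaller quantity and thus receives the factor $\mathcal{K}_\beta$), and iterating for $i=0,1,\dots,t$, telescopes to
\begin{equation*}
E_\alpha^\beta(\rho_{P|P_0\cdots P_{r-1}})\geqslant \sum_{j=0}^{t}(\mathcal{K}_\beta)^{j}E_\alpha^\beta(\rho_{PP_j})+(\mathcal{K}_\beta)^{t+1}E_\alpha^\beta(\rho_{P|P_{t+1}\cdots P_{r-1}}),
\end{equation*}
the R$\alpha$E analog of (\ref{thm14:2}).

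Next I would expand the residual term $E_\alpha^\beta(\rho_{P|P_{t+1}\cdots P_{r-1}})$ using the second family of hypotheses, $E_\alpha^2(\rho_{PP_j})\leqslant k E_\alpha^2(\rho_{P|P_{j+1}\cdots P_{r-1}})$ for $j=t+1,\dots,r-2$. Here the roles reverse: the single block $P_j$ is now the smaller quantity, so applying the building block with $X=[P_{j+1}\cdots P_{r-1}]$ and $Y=P_j$ attaches $\mathcal{K}_\beta$ to $E_\alpha^\beta(\rho_{PP_j})$, and iterating from $j=t+1$ up to $r-2$ yields
\begin{equation*}
E_\alpha^\beta(\rho_{P|P_{t+1}\cdots P_{r-1}})\geqslant \mathcal{K}_\beta\sum_{j=t+1}^{r-2}E_\alpha^\beta(\rho_{PP_j})+E_\alpha^\beta(\rho_{PP_{r-1}}),
\end{equation*}
the analog of (\ref{thm14:3}). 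Substituting this into the previous display and collecting powers of $\mathcal{K}_\beta$ produces precisely (\ref{thm15:1}).

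Since every step transfers verbatim from the T$q$E setting, I do not anticipate a genuine obstacle; the argument is structural rather than computational. The one point requiring care is the bookkeeping of which block counts as ``smaller'' in each of the two telescoping phases: in the first phase the orientation of the hypothesis forces $\mathcal{K}_\beta$ onto the tail $P|P_{i+1}\cdots P_{r-1}$ (hence the accumulating power $(\mathcal{K}_\beta)^{t+1}$ on the residual), while in the second phase it forces $\mathcal{K}_\beta$ onto each individual $P_j$ (hence the raising of the exponent to $t+2$ on the middle sum). Verifying that the validity range $\alpha\in[\tfrac{\sqrt7-1}{2},\infty)$ of (\ref{E3}) is respected at every invocation — which it is, since all the relations used hold on this range — then completes the argument.
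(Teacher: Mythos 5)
Your proposal is correct and is essentially the paper's own argument: the paper proves Theorem~\ref{thm15} by exactly this adaptation of the proof of Theorem~\ref{thm14}, replacing (\ref{T3}) by (\ref{E3}) and iterating the two-party weighted inequality (Lemma~\ref{lem1}(a)) in the same two telescoping phases. Your bookkeeping of which block receives the factor $\mathcal{K}_\beta$ in each phase, and hence of the exponents $t+1$ and $t+2$, matches the paper's derivation of (\ref{thm14:2}) and (\ref{thm14:3}).
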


\begin{remark}
	If $k E_\alpha^2(\rho_{PP_j})\geqslant E_\alpha^2(\rho_{P|P_{j+1}\cdots P_{r-1}})$ for all $j=0,1,\cdots, r-2$,  we have
	\begin{eqnarray}\label{thm15:4}
		E_\alpha^\beta(\rho_{P|P_0\cdots P_{r-1}})\geqslant \sum_{j=0}^{r-1}(\mathcal{K}_\beta)^{j}E_\alpha^\beta(\rho_{PP_j}).
	\end{eqnarray}
\end{remark}

\subsection{Tighter  polygamy relations  in terms of R$\alpha$EoA }\label{sec4-2}

Now we establish the tighter polygamy relations for R$\alpha$EoA by using a similar approach to T$q$EoA.

\begin{theorem}\label{thm3}
	If the subsystems $P_0,P_1,\cdots,P_{r-1}$ satisfy
	\begin{equation}\label{thm3:1}
		k E_\alpha^a(\rho_{PP_j})\geqslant E_\alpha^a(\rho_{PP_{j+1}})\geqslant0
	\end{equation}
	for $j=0,1,\cdots, r-2$ and $0<k\leqslant1$, we have
	\begin{eqnarray}\label{thm3:2}
		[E_\alpha^a(\rho_{P|P_0\cdots P_{r-1}})]^\mu\leqslant\sum\limits_{j=0}^{r-1}(\mathcal{K}_\mu)^{\omega_H(\overrightarrow{j})}[E_{\alpha}^a(\rho_{ PP_j})]^\mu,
	\end{eqnarray}
	where $\mu\in(0,1]$, $\alpha\in[\frac{\sqrt7-1}{2}, \frac{\sqrt{13}-1}{2}]$ and  $\mathcal{K}_\mu=\frac{(1+k)^\mu-1}{k^\mu}$.
\end{theorem}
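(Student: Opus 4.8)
The plan is to mirror the proof of Theorem \ref{thm7}, replacing the T$q$EoA by the R$\alpha$EoA throughout, since the analytic formula (\ref{E2}) and the subadditivity-type polygamy relation (\ref{E5}) play exactly the roles that (\ref{T7}) and (\ref{T5}) played before. First I would specialize (\ref{E5}) to $\mu=1$ and to the partition $\{P,P_0,\dots,P_{r-1}\}$, obtaining the base inequality $E_\alpha^a(\rho_{P|P_0\cdots P_{r-1}})\leqslant\sum_{j=0}^{r-1}E_\alpha^a(\rho_{PP_j})$, valid for $\alpha\in[\frac{\sqrt7-1}{2},\frac{\sqrt{13}-1}{2}]$. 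Raising both sides to the power $\mu\in(0,1]$, which preserves the inequality since $x\mapsto x^\mu$ is increasing on the nonnegative reals, it then suffices to prove the purely algebraic estimate
\begin{equation*}
\Big(\sum_{j=0}^{r-1}E_\alpha^a(\rho_{PP_j})\Big)^\mu\leqslant\sum_{j=0}^{r-1}(\mathcal{K}_\mu)^{\omega_H(\overrightarrow{j})}[E_\alpha^a(\rho_{PP_j})]^\mu,
\end{equation*}
so that the entanglement measure enters only through the nonnegative numbers $E_\alpha^a(\rho_{PP_j})$ and the hypothesis (\ref{thm3:1}).

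I would establish this algebraic estimate first for $r=2^s$ by induction on $s$, exactly as in (\ref{thm7:4})--(\ref{thm7:7}). For $s=1$ I would factor out $[E_\alpha^a(\rho_{PP_0})]^\mu$, use (\ref{thm3:1}) to guarantee $E_\alpha^a(\rho_{PP_1})/E_\alpha^a(\rho_{PP_0})\leqslant k\leqslant1$, and apply Lemma \ref{lem1}$\mathrm{(c)}$ with $x=E_\alpha^a(\rho_{PP_1})/E_\alpha^a(\rho_{PP_0})$ and $t=\mu$ to produce the factor $\mathcal{K}_\mu=\frac{(1+k)^\mu-1}{k^\mu}$. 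For the doubling step I would note that (\ref{thm3:1}) iterates to $E_\alpha^a(\rho_{PP_{j+2^{s-1}}})\leqslant k^{2^{s-1}}E_\alpha^a(\rho_{PP_j})$, so the ratio of the upper half-sum to the lower half-sum is at most $k^{2^{s-1}}\leqslant k\leqslant1$; Lemma \ref{lem1}$\mathrm{(c)}$ then splits the $2^s$-term power into the lower block plus $\mathcal{K}_\mu$ times the upper block. Applying the induction hypothesis to the lower block gives the weights $(\mathcal{K}_\mu)^{\omega_H(\overrightarrow{j})}$ directly, while for the upper block the relabeling $j\mapsto j-2^{s-1}$ raises each Hamming weight by exactly one, so the extra prefactor $\mathcal{K}_\mu$ upgrades $(\mathcal{K}_\mu)^{\omega_H(\overrightarrow{j})-1}$ to $(\mathcal{K}_\mu)^{\omega_H(\overrightarrow{j})}$, closing the induction.

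To reach arbitrary $r$ I would embed $\rho_{PP_0\cdots P_{r-1}}$ into the $2^s$-partite state $\gamma_{PP_0\cdots P_{2^s-1}}=\rho_{PP_0\cdots P_{r-1}}\otimes\sigma_{P_r\cdots P_{2^s-1}}$ for the least $s$ with $r\leqslant2^s$, exactly as in (\ref{thm7:8}); the product structure forces $E_\alpha^a(\gamma_{PP_j})=0$ for $j\geqslant r$ while leaving $E_\alpha^a(\gamma_{P|P_0\cdots P_{2^s-1}})=E_\alpha^a(\rho_{P|P_0\cdots P_{r-1}})$ and $E_\alpha^a(\gamma_{PP_j})=E_\alpha^a(\rho_{PP_j})$ for $j<r$, so the $2^s$-case collapses to the claimed bound. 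The main obstacle is the Hamming-weight bookkeeping in the doubling step rather than any analytic input: one must verify that the relabeling increases every Hamming weight in the upper block by precisely one and that the chained hypothesis (\ref{thm3:1}) genuinely controls the block-sum ratio by $k$, which is what licenses Lemma \ref{lem1}$\mathrm{(c)}$. A minor edge case worth dispatching at the outset is $E_\alpha^a(\rho_{PP_0})=0$, which by (\ref{thm3:1}) forces all the $E_\alpha^a(\rho_{PP_j})$ to vanish and renders (\ref{thm3:2}) trivial.
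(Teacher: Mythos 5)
Your proposal is correct and is essentially the paper's own argument: the paper does not write out a separate proof of Theorem \ref{thm3}, stating instead that the proofs in Section \ref{sec4} are ``similar to the cases for T$q$E and T$q$EoA,'' i.e., precisely your plan of mirroring the proof of Theorem \ref{thm7} with (\ref{E5}) and (\ref{E2}) in place of (\ref{T5}) and (\ref{T7}), reusing Lemma \ref{lem1}(c), the Hamming-weight doubling induction, and the tensor-product padding to reach arbitrary $r$. Your explicit treatment of the $E_\alpha^a(\rho_{PP_0})=0$ edge case is a small point the paper leaves implicit.
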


Since  $\Big(\mathcal{K}_\mu\Big)^{\omega_H(\overrightarrow{j})}\leqslant 1$
for $\mu\in(0,1]$ and $k\in(0,1]$, our new polygamy inequality for R$\alpha$EoA is tighter than the inequality (\ref{E5}) in \cite{Liang2020} under certain conditions for the GWV states.
Also, one finds that the smaller the $k$ is, the tighter the inequality (\ref{thm3:2}) is.

\begin{example}\label{exm4}
	Let us again consider the 4-qubit  GW state presented in Example \ref{exm2}.
	Choosing $\alpha=1.2$ we have
	$$E^a_{1.2}(\rho_{A_1A_2})=f_{1.2}\Big[\Big(\frac{\sqrt{2}}{2}\Big)^2\Big]\approx 0.549339, ~E^a_{1.2}(\rho_{A_1A_3})=f_{1.2}\Big[\Big(\frac{2\sqrt{2}}{5}\Big)^2\Big]\approx 0.372954.$$
	From (\ref{thm3:1}), we get $k\in [0.68, 1]$. Then our inequality (\ref{thm3:2}) yields  that $[E_{1.2}^a(\rho_{A_1|A_2A_3})]^\mu \leqslant 0.549339^\mu+\frac{(1+k)^\mu-1}{k^\mu}0.372954^\mu$, and $[E_{1.2}^a(\rho_{A_1|A_2A_3})]^\mu \leqslant 0.549339^\mu+(2^\mu-1)0.372954^\mu$ for $k=1$. While, the inequality (\ref{E5}) yields $[E_{1.2}^a(\rho_{A_1|A_2A_3})]^\mu \leqslant 0.549339^\mu+0.372954^\mu$. Hence, our results are better than one in \cite{Liang2020}, and the inequality gets tighter as $k$ decreases, see Fig. \ref{Fig4}.
\end{example}

\begin{figure}[H]
	\centering
	\includegraphics[width=8cm]{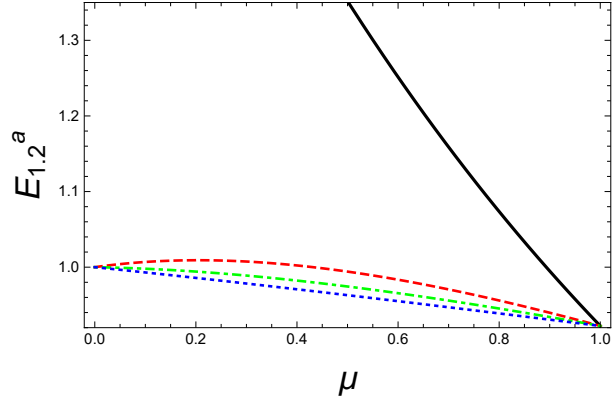}
	\caption{The vertical axis is the upper bound of the R\'{e}nyi-$\alpha$ entanglement of assistance $E_{1.2}^a(\rho_{A_1|A_2A_3})$. The red dashed (green dot-dashed, blue dotted) line represents the upper bound from our result (\ref{thm3:2}) for $k=1$ ($k=0.8$, $k=0.7$), and the  black solid  line represents the upper bound from (\ref{E5}) in \cite{Liang2020}.}
	\label{Fig4}
\end{figure}

Analogously we can improve the polygamy inequality of Theorem \ref{thm3} under certain conditions.

\begin{theorem}\label{thm4}
	For $\alpha\in[\frac{\sqrt7-1}{2},\frac{\sqrt{13}-1}{2}]$ we have
	\begin{eqnarray}\label{thm4:0}
		[E_\alpha^a(\rho_{P|P_0\cdots P_{r-1}})]^\mu\leqslant\sum\limits_{j=0}^{r-1}(\mathcal{K}_\mu)^{j}[E_{\alpha}^a(\rho_{ PP_j})]^\mu
	\end{eqnarray}
	conditioned that
	\begin{equation}\label{thm4:1}
		k E_\alpha^a(\rho_{PP_l})\geqslant \sum\limits_{j=l+1}^{r-1}E_\alpha^a(\rho_{PP_{j}}),
	\end{equation}
	for $l=0,1,\cdots, r-2, 0<k\leqslant1$, where $\mu\in (0,1]$ and $\mathcal{K}_\mu=\frac{(1+k)^\mu-1}{k^\mu}$.
\end{theorem}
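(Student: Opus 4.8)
The plan is to follow the same route as in the proof of Theorem \ref{thm8}, with the Tsallis-$q$ entanglement of assistance replaced by R$\alpha$EoA and the polygamy inequality (\ref{T5}) replaced by its R$\alpha$EoA counterpart (\ref{E5}). First I would invoke (\ref{E5}), which on the range $\alpha\in[\frac{\sqrt7-1}{2},\frac{\sqrt{13}-1}{2}]$ gives $E_\alpha^a(\rho_{P|P_0\cdots P_{r-1}})\leqslant\sum_{j=0}^{r-1}E_\alpha^a(\rho_{PP_j})$. Raising both nonnegative sides to the power $\mu\in(0,1]$ preserves the inequality since $t\mapsto t^\mu$ is increasing on $[0,\infty)$, so it then suffices to establish the purely algebraic inequality
\begin{equation*}
\Big(\sum_{j=0}^{r-1}E_\alpha^a(\rho_{PP_j})\Big)^\mu\leqslant\sum_{j=0}^{r-1}(\mathcal{K}_\mu)^j[E_\alpha^a(\rho_{PP_j})]^\mu.
\end{equation*}

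Next I would prove this algebraic inequality by induction on $r$. The base case $r=2$ is precisely the estimate (\ref{thm7:4}) with the T$q$EoA terms replaced by R$\alpha$EoA terms, following from Lemma \ref{lem1} $\mathrm{(c)}$ applied with $x=E_\alpha^a(\rho_{PP_1})/E_\alpha^a(\rho_{PP_0})\leqslant k$ and $t=\mu$. For the inductive step I would peel off the $P_0$ contribution: writing the left-hand side as $[E_\alpha^a(\rho_{PP_0})]^\mu(1+\xi)^\mu$ with $\xi=\big(\sum_{j=1}^{r-1}E_\alpha^a(\rho_{PP_j})\big)/E_\alpha^a(\rho_{PP_0})$, the hypothesis (\ref{thm4:1}) at $l=0$ guarantees $0\leqslant\xi\leqslant k\leqslant1$, so Lemma \ref{lem1} $\mathrm{(c)}$ applies and yields
\begin{equation*}
\Big(\sum_{j=0}^{r-1}E_\alpha^a(\rho_{PP_j})\Big)^\mu\leqslant[E_\alpha^a(\rho_{PP_0})]^\mu+\mathcal{K}_\mu\Big(\sum_{j=1}^{r-1}E_\alpha^a(\rho_{PP_j})\Big)^\mu.
\end{equation*}

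Finally I would apply the induction hypothesis to the tail $\{P_1,\ldots,P_{r-1}\}$, which is legitimate because the conditions (\ref{thm4:1}) for $l=1,\ldots,r-2$ are exactly the remaining hypotheses of the theorem restricted to these subsystems; this bounds the tail by $\sum_{j=1}^{r-1}(\mathcal{K}_\mu)^{j-1}[E_\alpha^a(\rho_{PP_j})]^\mu$, and after multiplying by $\mathcal{K}_\mu$ and reindexing one obtains exactly $\sum_{j=0}^{r-1}(\mathcal{K}_\mu)^j[E_\alpha^a(\rho_{PP_j})]^\mu$, completing the induction. I do not anticipate a genuine obstacle, since every step mirrors Theorem \ref{thm8}; the only points requiring care are the trivial degenerate case $E_\alpha^a(\rho_{PP_0})=0$ (which forces the whole tail to vanish by (\ref{thm4:1}) and makes the claim immediate) and staying within $\alpha\in[\frac{\sqrt7-1}{2},\frac{\sqrt{13}-1}{2}]$, so that the analytic formula (\ref{E2}) and hence the polygamy bound (\ref{E5}) remain valid, exactly as in Theorem \ref{thm3}.
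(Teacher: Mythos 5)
Your proposal is correct and follows exactly the route the paper intends: the paper gives no separate proof of Theorem \ref{thm4}, stating instead that the proofs in Section \ref{sec4} mirror those of Section \ref{sec3}, and your argument is precisely the proof of Theorem \ref{thm8} with T$q$EoA replaced by R$\alpha$EoA and the polygamy bound (\ref{T5}) replaced by (\ref{E5}) (used at $\mu=1$ before raising to the power $\mu$). Your added remarks on the degenerate case $E_\alpha^a(\rho_{PP_0})=0$ and on the validity range of $\alpha$ are careful touches the paper leaves implicit.
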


Due to $\omega_H(\overrightarrow{j})\leqslant j$, one has $\sum\limits_{j=0}^{r-1}\Big(\mathcal{K}_\mu\Big)^{j}[E_\alpha^a(\rho_{ PP_j})]^\mu\leqslant\sum\limits_{j=0}^{r-1}\Big(\mathcal{K}_\mu\Big)^{\omega_H(\overrightarrow{j})}[E_\alpha^a(\rho_{ PP_j})]^\mu$
for $\mu\in(0,1]$ and $k\in(0,1]$. Therefore, the inequality (\ref{thm4:0}) of Theorem \ref{thm4} is tighter than the inequality (\ref{thm3:2}) of Theorem \ref{thm3}.

Similar to the case of T$q$EoA, we also have the following polygamy relation for R$\alpha$EoA under certain conditions.

\begin{theorem}\label{thm4:3}
	For $\alpha\in[\frac{\sqrt7-1}{2},\frac{\sqrt{13}-1}{2}]$ we have
	\begin{eqnarray}\label{thm4:4}
		[E_\alpha^a(\rho_{P|P_0\cdots P_{r-1}})]^\mu&\leqslant& \sum_{j=0}^{t}(\mathcal{K}_\mu)^{j}[E_\alpha^a(\rho_{PP_j})]^\mu+(\mathcal{K}_\mu )^{t+2}\sum_{j=t+1}^{r-2}[E_\alpha^a(\rho_{PP_j})]^\mu\nonumber\\
		&&\ \ \ +(\mathcal{K}_\mu)^{t+1}[E_\alpha^a(\rho_{PP_{r-1}})]^\mu
	\end{eqnarray}
	conditioned that
	$k E_\alpha^a(\rho_{PP_i})\geqslant E_\alpha^a(\rho_{P|P_{i+1}\cdots P_{r-1}})$ for $i=0,1,\cdots, t$ and $E_\alpha^a(\rho_{PP_j})\leqslant k E_\alpha^a(\rho_{P|P_{j+1}\cdots P_{r-1}})$ for $j=t+1,\cdots, r-2$, $ \forall  0<k\leqslant1, 0\leqslant t\leqslant r-3, r\geqslant 3$, where $\mu\in (0,1]$ and $\mathcal{K}_\mu=\frac{(1+k)^\mu-1}{k^\mu}$.
\end{theorem}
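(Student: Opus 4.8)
The plan is to transplant the argument of Theorem~\ref{thm14} to the R$\alpha$EoA setting, reversing every inequality (since this is a polygamy, i.e.\ upper-bound, statement), replacing Lemma~\ref{lem1}$\mathrm{(a)}$ by Lemma~\ref{lem1}$\mathrm{(c)}$, and using the $r=2$ instance of Theorem~\ref{thm3} as the elementary building block in place of the $r=2$ instance of Theorem~\ref{thm10}. First I would isolate that building block: from the subadditivity~(\ref{E5}) one has $E_\alpha^a(\rho_{P|P_0P_1})\leqslant E_\alpha^a(\rho_{PP_0})+E_\alpha^a(\rho_{PP_1})$, and setting $x=E_\alpha^a(\rho_{PP_1})/E_\alpha^a(\rho_{PP_0})$, which lies in $[0,k]\subseteq[0,1]$ precisely when $kE_\alpha^a(\rho_{PP_0})\geqslant E_\alpha^a(\rho_{PP_1})$, the monotonicity of $z\mapsto z^\mu$ together with Lemma~\ref{lem1}$\mathrm{(c)}$ yields
\begin{equation*}
[E_\alpha^a(\rho_{P|P_0P_1})]^\mu\leqslant [E_\alpha^a(\rho_{PP_0})]^\mu+\mathcal{K}_\mu[E_\alpha^a(\rho_{PP_1})]^\mu .
\end{equation*}
This is the $r=2$ form of~(\ref{thm3:2}), with the weight $\mathcal{K}_\mu$ attached to the \emph{smaller} term $P_1$.

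Next I would peel off the head blocks $P_0,\dots,P_t$. Viewing $P_i$ as the first block and the composite $P_{i+1}\cdots P_{r-1}$ as the second, the hypothesis $kE_\alpha^a(\rho_{PP_i})\geqslant E_\alpha^a(\rho_{P|P_{i+1}\cdots P_{r-1}})$ for $i=0,\dots,t$ makes $P_i$ the larger block, so each application of the building block puts $\mathcal{K}_\mu$ on the composite remainder. Iterating, exactly as in~(\ref{thm14:2}), the weights accumulate into powers and I obtain
\begin{equation*}
[E_\alpha^a(\rho_{P|P_0\cdots P_{r-1}})]^\mu\leqslant\sum_{j=0}^{t}(\mathcal{K}_\mu)^{j}[E_\alpha^a(\rho_{PP_j})]^\mu+(\mathcal{K}_\mu)^{t+1}[E_\alpha^a(\rho_{P|P_{t+1}\cdots P_{r-1}})]^\mu .
\end{equation*}

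I would then treat the tail $P_{t+1}\cdots P_{r-1}$, where the building block is applied in reversed orientation: the condition $E_\alpha^a(\rho_{PP_j})\leqslant kE_\alpha^a(\rho_{P|P_{j+1}\cdots P_{r-1}})$ makes the composite remainder the larger block, so I view $P_{j+1}\cdots P_{r-1}$ as the first block and $P_j$ as the second and apply the building block for $j=t+1,\dots,r-2$, mirroring~(\ref{thm14:3}). Now $\mathcal{K}_\mu$ lands on each leaf $P_j$ and does not accumulate, giving
\begin{equation*}
[E_\alpha^a(\rho_{P|P_{t+1}\cdots P_{r-1}})]^\mu\leqslant\mathcal{K}_\mu\sum_{j=t+1}^{r-2}[E_\alpha^a(\rho_{PP_j})]^\mu+[E_\alpha^a(\rho_{PP_{r-1}})]^\mu .
\end{equation*}
Substituting this into the head estimate and collecting the powers of $\mathcal{K}_\mu$ produces~(\ref{thm4:4}).

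The main obstacle---really the only delicate point---is the orientation switch between head and tail and its effect on how the weight propagates. In the head the weight sits on the recursion branch, so it compounds to $(\mathcal{K}_\mu)^{j}$ and leaves the factor $(\mathcal{K}_\mu)^{t+1}$ in front of the remaining composite; in the tail the swapped roles put the weight on the leaf branch, so it stays at a single power on each $P_j$ and a bare $1$ on $P_{r-1}$. Composing the two iterations then multiplies the middle block's weight up to $(\mathcal{K}_\mu)^{t+2}$ while leaving $(\mathcal{K}_\mu)^{t+1}$ on $P_{r-1}$, which is exactly the asymmetric structure of~(\ref{thm4:4}). One must also check that the range $\alpha\in[\frac{\sqrt7-1}{2},\frac{\sqrt{13}-1}{2}]$ secures the analytic identity~(\ref{E2}), so that the subadditivity~(\ref{E5}) is legitimately available at every composite partition $P|P_{i+1}\cdots P_{r-1}$ arising in the recursion.
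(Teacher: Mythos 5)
Your proposal is correct and follows essentially the same route as the paper, whose proof of this theorem is exactly the transplant of Theorem~\ref{thm14}'s argument: the $r=2$ polygamy building block obtained from (\ref{E5}) and Lemma~\ref{lem1}~$\mathrm{(c)}$ is iterated on the head blocks $P_0,\dots,P_t$ so the weight compounds on the composite remainder, then applied with reversed orientation on the tail so each leaf $P_j$ carries a single factor of $\mathcal{K}_\mu$, and combining the two estimates yields the $(\mathcal{K}_\mu)^{t+2}$ and $(\mathcal{K}_\mu)^{t+1}$ weights in (\ref{thm4:4}).
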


\begin{remark}
	If $k E_\alpha^a(\rho_{PP_j})\geqslant E_\alpha^a(\rho_{P|P_{j+1}\cdots P_{r-1}})$ for all $j=0,1,\cdots, r-2$, then
	\begin{eqnarray}\label{thm9:6}
		[E_\alpha^a(\rho_{P|P_0\cdots P_{r-1}})]^\mu\leqslant \sum_{j=0}^{r-1}(\mathcal{K}_\mu)^{j}[E_\alpha^a(\rho_{PP_j})]^\mu.
	\end{eqnarray}
	Since $E_\alpha(\rho_{P|P_0\cdots P_{r-1}})=E_\alpha^a(\rho_{P|P_0\cdots P_{r-1}})$ for $\alpha\in[\frac{\sqrt7-1}{2},\frac{\sqrt{13}-1}{2}]$, the above inequalities (\ref{thm3:2}), (\ref{thm4:0}) and (\ref{thm4:4}) are also upper bounds of $E_\alpha(\rho_{P|P_0\cdots P_{r-1}})$ for GWV states $\ket{\psi}_{A_1\cdots A_n}$.
\end{remark}

\section{Conclusion}\label{sec5}
Both monogamy and polygamy relations of quantum entanglement are the fundamental properties of multipartite entangled states. We have investigated the monogamy
and polygamy relations of multipartite entanglement for the arbitrary $n$-qudit GWV states with respect to different partitions. By using the Hamming weight of the binary vectors related to the partition of the subsystems, we have established a class of monogamy inequalities in terms of the $\beta$th power of T$q$E for the GWV states when $\beta\geqslant2$, as well as the polygamy inequalities in terms of the $\mu$th power of T$q$EoA when $0<\mu\leqslant1$. Similarly, we have also provided the monogamy and polygamy relations based on E$\alpha$E and E$\alpha$EoA for the GWV states. We
have further shown that our monogamy and polygamy
inequalities hold  under some conditions for the GWV states  in a tighter way than the existing ones and can also recover the previous relations,  thus they give rise to better restrictions on entanglement distribution among the subsystems of the GWV states.

\begin{acknowledgements}
This work is supported by NSFC (Grant No. 12075159), Beijing Natural Science Foundation (Z190005), Academy for Multidisciplinary Studies, Capital Normal University, the Academician Innovation Platform of Hainan Province, and Shenzhen Institute for Quantum Science and Engineering, Southern University of Science and Technology (No. SIQSE202001).
\end{acknowledgements}


\end{document}